\title{On Spatial Capacity of Wireless Ad Hoc Networks with  Threshold Based
Scheduling}
\author{Yue Ling Che, Rui Zhang, Yi Gong, and Lingjie Duan
\thanks{This work has been presented in part at \emph{IEEE International
Symposium on Information Theory (ISIT),  Istanbul, Turkey, July 7-12, 2013}.}
\thanks{Y.~L.~Che and Y.~Gong are with the School of Electrical
and Electronic Engineering, Nanyang Technological University,
Singapore (e-mail: chey0004@ntu.edu.sg; eygong@ntu.edu.sg).}
\thanks{R.~Zhang is with the Department of Electrical and Computer Engineering,
National
University of Singapore (e-mail: elezhang@nus.edu.sg). He is also with the
Institute for Infocomm
Research, A*STAR, Singapore.}
\thanks{L.~Duan is with the Engineering Systems and Design Pillar, Singapore
University of Technology and Design (e-mail: lingjie\_duan@sutd.edu.sg).}
}
\begin{document}
\maketitle \thispagestyle{empty}

\begin{abstract}

This paper  studies spatial capacity in a stochastic wireless {\it ad hoc}
network, where  multi-stage probing and data transmission are sequentially
performed. We propose a
novel signal-to-interference-ratio (SIR) threshold based scheduling scheme: by
starting with initial probing, each transmitter iteratively decides to further
probe or stay idle, depending on whether the estimated SIR in the proceeding
probing is larger or  smaller than a predefined threshold. Since only local SIR
information is required for making transmission decision, the proposed scheme is
appropriate for distributed implementation in practical wireless ad hoc
networks. Although one can assume that the transmitters are initially deployed
according to a homogeneous Poisson point process (PPP), the SIR based scheduling
makes the PPP no longer applicable to model the locations of retained
transmitters in the subsequent probing and data transmission phases, 
due to the interference induced coupling in their decisions. As the
analysis becomes very
complicated, we first focus on  single-stage probing  and find that
when the SIR threshold is set sufficiently small to assure an acceptable
interference level in the network, the proposed scheme can greatly outperform
the non-scheduling reference scheme in terms of spatial capacity.  We clearly
characterize the spatial capacity and obtain exact/approximate closed-form
expressions, by proposing a new approximate approach to deal with the correlated
SIR distributions over non-Poisson point processes.
Then we successfully extend to multi-stage probing by properly designing the
multiple SIR thresholds to assure gradual improvement of the spatial capacity.
Furthermore, we  analyze the impact of multi-stage probing overhead and present
a probing-capacity tradeoff in scheduling design. Finally, extensive numerical
results are  presented to demonstrate the performance of the proposed scheduling 
as compared to existing schemes.

\end{abstract}

\begin{keywords}
Wireless ad hoc network, threshold based scheduling, spatial capacity,
stochastic geometry.
\end{keywords}

\setlength{\baselineskip}{1.3\baselineskip}
\newtheorem{definition}{\underline{Definition}}[section]
\newtheorem{fact}{Fact}
\newtheorem{assumption}{Assumption}
\newtheorem{theorem}{\underline{Theorem}}[section]
\newtheorem{lemma}{\underline{Lemma}}[section]
\newtheorem{corollary}{\underline{Corollary}}[section]
\newtheorem{proposition}{\underline{Proposition}}[section]
\newtheorem{example}{\underline{Example}}[section]
\newtheorem{remark}{\underline{Remark}}[section]
\newcommand{\mv}[1]{\mbox{\boldmath{$ #1 $}}}
\newtheorem{property}{\underline{Property}}[section]

\section{Introduction}
Wireless \emph{ad hoc} networks have emerged as a promising technology  that can
provide seamless communication between wireless users (transmitter-receiver
pairs) without relying on any pre-existing infrastructure.
In such networks, the wireless users communicate with each other in a
distributed manner. Due to the lack of centralized coordinators to coordinate
the transmissions among the users, the wireless ad hoc network is  under
competitive and interference-dominant environment in nature. Thereby, efficient
transmission schemes for transmitters to effectively schedule/adapt their
transmissions are appealing for system performance improvement, and thus have
attracted wide research attentions in the past  decade.

Traditionally, each transmitter is enabled to independently decide whether to
transmit over a particular channel based on its own willingness  or channel
strength \cite{Andrews.COM.09}-\cite{FadingScheduling}, and the transmission
rate of each user can be  maximized by finding an optimal transmission
probability or an optimal  channel strength threshold, respectively. Although
easy to be implemented, such independent transmission schemes do not consider
the resulting user interactions in the wireless ad hoc networks due to the
co-channel interference, and thus do not achieve high system performance in
general cases.
Therefore, more complex transmission schemes have been proposed to exploit the
user interactions by exploring the  information of signal-to-interference-ratio
(SIR). For example, by iteratively adapting the transmit power level based on
the estimated SIR, the  Foschini-Miljanic algorithm \cite{Foschini-Miljanic}
assures zero outage probability and/or minimum aggregate power consumption for
uplink transmission in a cellular network. In \cite{Yates}, Yates has studied
power convergence conditions for such iterative power control algorithms.
Moreover, there have been some recent studies (e.g.  \cite{ElBatt.TWC.2004} and
\cite{Huang.TCOM.09}) that extend  the Foschini-Miljanic algorithm to the
wireless ad hoc network through joint scheduling and power control transmission
schemes.  In addition, by adapting the transmission probability depending on the
received SIR, \cite{Baccelli.Adaptive.13} has studied various
random access schemes to improve the system
throughput and/or the user fairness. However,  \cite{Foschini-Miljanic}-\cite{Baccelli.Adaptive.13}
either require  each transmitter to know
at least the wireless environment information of its neighbors, or are of high implementation
complexity, and thus are not appropriate for practical large-scale wireless ad hoc networks.

On the other hand,  due to the randomized location of each transmitter and the
effects of channel fading, the  network-level performance analysis is
fundamentally important for the study of wireless ad hoc networks.  It is noted
that Gupta and Kumar  in  \cite{Gupta.IT.2000} studied \emph{scaling laws},
which quantified the increase of the volume of capacity region  over the number
of transmitters in  ad hoc networks. Moreover, to determine   the set of active
transmitters that can yield maximum  aggregate Shannon capacity in the network,
the authors in   \cite{AndrewsDinitz.09}-\cite{Asgeirsson.11} addressed the
capacity maximization problem for an arbitrary wireless ad hoc network.
However,  \cite{Gupta.IT.2000}-\cite{Asgeirsson.11} did not consider the impact
of \emph{spatial configuration} of the ad hoc network, which is a critical factor that
determines the ad hoc network capacity~\cite{Andrews.Mag.10}. 
It came to our attention that as a powerful tool to   capture the impact
of wireless users' spatial randomness  on the network performance,
stochastic geometry  \cite{Stoyan.SG.95} is able to provide
more comprehensive characterization of the performance of wireless networks,
 and thus has attracted great attentions from both academy and industry \cite{Andrews.Mag.10}, \cite{Ghosh.Mag.12}.
Among all the tools provided by stochastic geometry, homogeneous Poisson point process (PPP) \cite{Kingman.book93}
is the most widely used one for network topology modeling and performance analysis. Under the assumption that
 the transmitters are deployed according to a homogeneous PPP,
the  exact/approximate capacity of a wireless ad hoc network under various
\emph{independent} transmission schemes, such as Aloha-based random transmission
\cite{Andrews.COM.09}, channel-inversion based power control
\cite{Jindal.TWC.08}, and channel-threshold based scheduling \cite{Weber.IT.07},
\cite{FadingScheduling},    can all be successfully characterized by using
advanced tools from stochastic geometry.
However, limited work based on stochastic geometry has studied  SIR-based transmission schemes, where the \emph{user interactions} are involved.
It is  noted that \cite{Kim.ProbScheld.14} studied a probability-based scheduling scheme, where each transmitter
independently adjusts its current transmission  probability based on the received SIR
in the proceeding iteration.
However,   \cite{Kim.ProbScheld.14}  only studied the convergence
of the probability-based  scheduling,  without addressing the network capacity with spatiality distributed users. 
To our best knowledge, there has been no existing work on studying the wireless ad hoc network capacity
with a SIR-based transmission scheme. Hence, the impact of SIR-based
transmissions is limitedly understood from the network-level point of view.

\emph{A principle goal of this study is to use stochastic geometry to fill the void of wireless  network
capacity characterization by an efficient  SIR-based transmission scheme.}
To this end, we propose a novel  SIR-threshold based  scheduling scheme for a
single-hop slotted wireless ad hoc network. We consider a probe-and-transmit
protocol,  where multi-stage probings are  sequentially performed to gradually
determine the  transmitters that are allowed to transmit data in each slot.
Specifically, we assume there are in total $N$ probing phases and one data transmission phase in each slot,
$1\leq N < \infty$. We sequentially label the $N$ probing phases  as  P-Phase $0$, P-Phase $1$,
..., and P-Phase $N-1$, and label the data transmission phase as D-Phase.
As   illustrated in Fig.~1, if the feedback SIR from  receiver $i$ in P-Phase 
$k-1$, $1\leq k \leq N-1$, is no smaller than a pre-defined threshold,
transmitter $i$ decides to transmit in P-Phase~$k$; otherwise, to improve the
system throughput as well as save its own energy, transmitter $i$  stays idle in
the remaining time of the slot as in \cite{TwoProbing}, so as to let other
transmitters that have higher SIR levels   re-contend the current  transmission
opportunity.
Since  each transmitter only requires  direct-channel SIR feedback from its
intended receiver for limited times, the proposed scheme is  appropriate for
distributed implementation in  practical wireless ad hoc networks.
In this paper,  we characterize the wireless ad hoc network capacity  with a
metric called {\it spatial capacity}, which has been used in \cite{Baccelli.IT.06}
and gives the average number of
successful transmitters per unit area for any given  initial transmitter
density. We aim  at  closed-form spatial  capacity characterization  and  maximization
by exploring the SIR-threshold based transmission.

\begin{figure}[t]
\centering
\DeclareGraphicsExtensions{.eps,.mps,.pdf,.jpg,.png}
\DeclareGraphicsRule{*}{eps}{*}{}
\includegraphics[angle=0, width=0.98\textwidth]{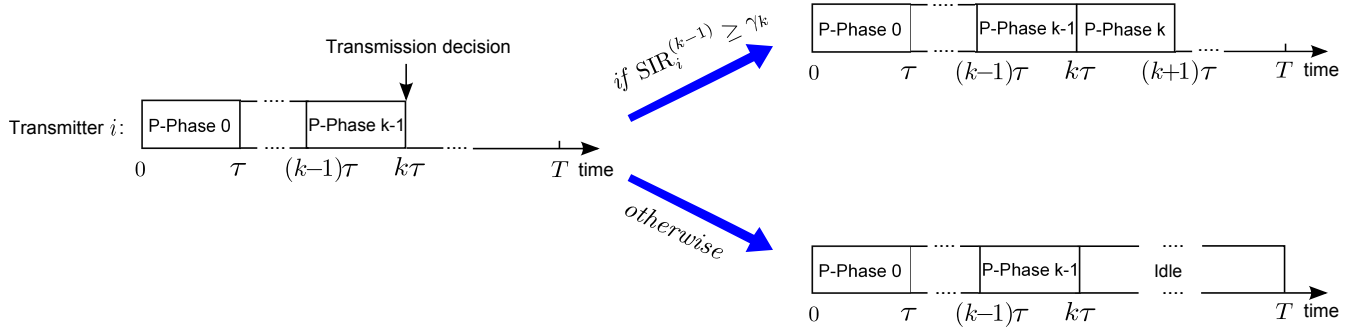}
\caption{Illustration of the SIR-threshold based transmission in P-Phase $k$: If
the  SIR in P-Phase~$k-1$  is no smaller than the threshold $\gamma_k\geq 0$ in
P-Phase~$k$, the transmitter decides to transmit in P-Phase~$k$; otherwise, it stays
idle in the remaining time of this slot. }

\label{fig: proposed_scheme}
\end{figure}

 The key contributions of this paper are summarized as follows.

\begin{itemize}
 \item \emph{Novel SIR-threshold based scheduling scheme:} In Section II, we
propose a  novel SIR-threshold based transmission scheme for a single-hop
wireless ad hoc network, which can be implemented efficiently in a distributed
manner. Though one can use  a   homogeneous PPP to model the stochastic
locations of the   transmitters in the initial probing phase, we find that due
to the iterative SIR-based scheduling, the PPP model is no longer applicable to
model the locations of the retained transmitters  in all the subsequent probing
or data transmission phases. Furthermore, since the  SIR distributions in all
the probing and data transmission phases are strongly correlated, it is
challenging to analyze/characterize the spatial capacity of the proposed scheme.

 \item \emph{Single-stage probing for  spacial capacity improvement:}
 In Section III, we start up with single-stage probing ($N=1$) to clearly decide
the SIR threshold for the proposed scheme and characterize the spatial capacity.
We show that a small SIR threshold can efficiently reduce the  retained
transmitter number and thus the interference level in the data transmission phase, while a large SIR threshold will overly reduce the retained transmitter
number and does not help improve the spatial capacity.  We also propose a new
approximate approach to characterize the spatial capacity in closed-form, which
is useful for analyzing performance of wireless networks with interacted
transmitters.

\item \emph{Multi-stage probing for spatial capacity improvement:}
In Section IV, we extend proposed scheduling scheme from the single-stage
probing ($N=1$) to multi-stage probing ($N>1$) for greater spatial capacity
improvement. We show that once a sequence of increasing SIR thresholds are
properly decided over probing phases, the spatial capacity is assured to
gradually improve. As multi-stage probing can introduce non-ignorable overhead
in each time slot, which  reduces the spatial capacity, we study an interesting
probing-capacity tradeoff over the probing-stage number $N$.

\item \emph{Performance evaluations for network design:} In both Section III and
Section IV, we also provide extensive numerical results  to further evaluate the
impact of  key  parameters of the proposed scheme. In particular, we present a
density-capacity tradeoff in Section III-C-1), which shows that a small initial
transmitter density can help improve the spatial capacity, while a large one
will introduce high interference level and thus reduce the spatial capacity.
To highlight the spatial capacity improvement performance of the 
proposed scheme, we also compare the proposed scheme with existing distributed 
scheduling schemes in Section III-C-2). Moreover, we consider a practical scenario with 
SIR estimation and feedback errors and show that the proposed design is robust to the SIR errors in Section III-C-3) by simulation. 
In Section IV, we study an example with $N=2$ and show the
corresponding spatial capacity over  both  SIR thresholds  in P-Phase~1 and 
D-Phase. Interestingly, our numerical results show that the
former SIR threshold plays  a more critical role in determining the spatial
capacity than  the latter one, since the  former SIR threshold decides how many
transmitters can  have a second chance to contend the transmission opportunity.

\end{itemize}

It is noted that some of the existing work has addressed the throughput/capacity analysis of  a wireless communication system from the information-theoretic point of view.
For example,  Tse and Hanly considered a multipoint-to-point system and characterized the
throughput capacity region and delay-limited capacity region of the fading multiple-access channel in \cite{Tse.IT.98} and \cite{Hanly.IT.98}, respectively,
where the optimal power and/or rate allocation that can achieve the boundary of the capacity regions was derived.
Although appealing, both \cite{Tse.IT.98} and \cite{Hanly.IT.98} have assumed multiuser detection at a centralized receiver and  ignored the impact of the random network topology driven by mobile transmitters and receivers’ mobility,
and thus cannot completely provide network-level system performance characterization with distributed single-user detection (i.e., treating the multiuser interference as noise)  receivers.
Unlike Tse and Hanly's works in \cite{Tse.IT.98} and \cite{Hanly.IT.98}, we use stochastic geometry to model the large-scale random wireless ad hoc network topology,
and novelly analyze the network-level performance of the iterative SIR-threshold based scheduling.

In addition, it is also noted that  some existing work has adopted tools from stochastic geometry to study the  non-PPP based wireless network.
For example, by  using a PPP to approximate the underlying non-PPP based spatial distribution of the transmitters' locations, \cite{Hasan.TWC.2007.GuardZone}-\cite{Song.JSAC} have successfully characterized the non-PPP based wireless network capacity.  Unlike \cite{Hasan.TWC.2007.GuardZone}-\cite{Song.JSAC}, due to the iterative SIR-based scheduling of the proposed scheme, we need to address not only the non-PPP based spatial distribution of the transmitters' locations,  but also  the resulting strongly-correlated SIR distributions
over all probing and data transmission phases. To our best knowledge, such correlated SIR analysis/chracterization in non-PPP based wireless networks has not been addressed in the existing work based on stochastic geometry.

\section{System Model and Performance Metric} \label{section: SC_model}
In this section, we  describe the considered transmission schemes in this paper.
We then  develop the network model based on stochastic geometry. At last, we
define the  spatial capacity as our performance metric.

\subsection{Transmission Schemes}
We focus on the proposed scheme with SIR-threshold based scheduling. For
comparison, we also consider a reference scheme without any transmission
scheduling.
For both transmission schemes, we assume that all transmitters  transmit in a
synchronized time-slotted manner. We also  assume  that all  transmitters
transmit at the same power level,\footnote{In general, each transmitters can
transmit at different power levels by iteratively adjusting its transmit power
based on the feedback SIR information, as in \cite{Foschini-Miljanic} or
\cite{Yates}. However, in this paper,  we mainly focus on SIR-based transmission
scheduling and thus restrict transmit power adaptation  to be binary for
simplicity.} which is normalized to be unity for convenience.

\subsubsection{SIR-Threshold Based Scheme}
Based on the probe-and-transmit protocol, in each time slot, $N$ probing phases  
with $1\leq N < \infty$ are sequentially implemented before the data transmission phase. 
We assume $N$ is a pre-given parameter and its effects will
be studied later in Section IV-B.
Moreover, as shown in Fig.~1, we denote the duration of a  time slot and a
probing phase  as $T$ and $\tau$, respectively, with $\tau\ll T$, such that
$N\tau<T$, as in \cite{TwoProbing}.
By normalizing over  $T$,   the \emph{effective data transmission time} in a
time slot is obtained as $\frac{T-N\tau}{T}$, which reduces linearly over $N$
\cite{Hossian.book}.  Furthermore, we assume if a transmitter transmits probing
signals in a probing phase, its intended receiver is able to measure the
received signal power over the total interference power, i.e., the  SIR, and
feeds it back to the transmitter at the end of the probing phase.
The specific algorithm design on SIR estimation and feedback is out of the scope
of this paper and is not our focus.
To obtain tractable analysis,  we assume  perfect SIR estimation and  feedback in this paper, and thus the SIR value is exactly
known at the transmitter; however, the impact of finite SIR estimation and feedback errors on the network capacity is important to practical design and thus will also be evaluated by simulation.  

According to the feedback SIR level of its own channel,
each transmitter  iteratively performs the threshold-based transmission
decision in each P-Phase or D-Phase, for which the details are
given as follows:
\begin{itemize}
\item In the initial probing phase, i.e.,  P-Phase $0$, to initialize the
communication between each transmitter and  receiver pair, all transmitters
independently transmit   probing signals to their intended receivers. 
Each receiver then estimates the channel amplitude and phase (for possible
coherent communication in the subsequent probing  and data transmission phases),
and measures the received SIR of the probing signal. Each transmitter receives
the feedback SIR from its intended receiver at the end of P-Phase $0$.

\item  In each of the remaining probing phases from P-Phase $1$ to P-Phase $N-1$, by
exploiting the feedback SIR in the proceeding probing phase, each transmitter
decides whether to transmit in the current probing phase with a predefined SIR
threshold. Specifically, suppose a transmitter  transmits in P-Phase $k-1$, $1\leq
k \leq N-1$. As shown in Fig.~1, if the feedback SIR  in P-Phase $k-1$  is larger
than or equal to the predefined SIR-threshold, denoted by $\gamma_{k}\geq 0$ for
P-Phase $k$, the transmitter   continues its transmission in P-Phase $k$ and thus
receives the feedback SIR in P-Phase $k$; otherwise, to improve the system
throughput as well as save its energy, the transmitter  decides not to transmit
any more in the remaining time of this  slot and will seek another transmission
opportunity in the next  slot, so as to let other transmitters that have higher
SIR levels to re-contend the current transmission opportunity.

\item In the D-Phase, similar to the SIR-threshold based
scheduling from P-Phase $1$ to P-Phase $N-1$, if a transmitter transmits in P-Phase $N-1$ and its  feedback SIR in P-Phase $N-1$ is larger than or equal to the
predefined threshold, denoted by $\gamma_N\geq 0$ for the D-Phase,  the transmitter sends data to its intended receiver; otherwise, the
transmitter remains silent in the rest time of this slot. The data transmission
is successful if the   SIR  at the receiver is larger than or equal to the
required SIR level, denoted by $\beta>0$.
\end{itemize}

\subsubsection{Reference Scheme}
There is no transmission scheduling in the reference scheme. In each time slot,
we assume all transmitters transmit data directly to their intended receivers in an
independent manner. Thus, the effective data transmission time for the reference
scheme is $1$.\footnote{It is worth pointing out that for the reference scheme,
an initial training is needed prior to data transmission for the receiver to
estimate the channel for coherent communication, similar to the initial probing
of the proposed scheme with $N=1$, but without the SIR feedback to the
transmitter. Here, we have assumed that such training incurs a negligible time
overhead as compared to each slot duration.}     The data transmission is
successful if  the SIR at the receiver  is larger than or equal to the  required
SIR level $\beta$ as the proposed scheme.
Note that by implementing an initial probing phase before the data transmission,
the reference scheme can be improved to be a proposed scheme with single-stage
probing.

\subsection{Network Model}
In the next, we develop the network model based on stochastic geometry. For both
considered transmission schemes, we focus on single-hop communication in one
particular time slot.

For both schemes, we assume that all transmitters are independently and uniformly distributed
in the unbounded two-dimensional plane $\mathbb{R}^2$.
We thus model the
locations of all  the transmitters by a  homogeneous PPP with density $\lambda$. 
Due to the lack of central infrastructure for coordination in the wireless ad hoc network, we assume the transmitters have no knowledge about their surrounding wireless environment, and thus  intend to transmit independently in a time slot
with probability $\theta\in (0,1)$, as in \cite{Andrews.COM.09}-\cite{FadingScheduling}.
Denote $\lambda_0=\lambda\theta$ as the density of the initial transmitters that have the intention
to transmit in a particular time slot.
According to the Coloring theory \cite{Stoyan.SG.95}, the process of the initial transmitters
for both schemes is a homogeneous PPP with density  $\lambda_0$, which is denoted by $\Phi_0$.
Without loss of generality, we assume $\lambda$ and $\theta$ and hence $\lambda_0$ are given parameters, and will discuss the effects of $\lambda_0$ later in Section III-C. 
We assume each transmitter has one intended receiver,
which  is  uniformly distributed on a circle of radius $d$ meters
(m) centered at the transmitter.
We denote the locations of the $i$-th transmitter and its intended receiver as
$x_i$, with $x_i \in \Phi_0$, and $r_i$ (not included in $\Phi_0$),
respectively. The path loss between the $i$-th transmitter  and the $j$-th
receiver is given by $l_{ij}=|x_i-r_j|^{-\alpha}$, where $\alpha\!>\!2$ is the
path-loss exponent.
We use $h_{ij}$ to denote the distance-independent channel fading coefficient
from transmitter $i$ to receiver $j$. We assume flat Rayleigh fading, where all
$h_{ij}$'s are  independent and  exponentially distributed random variables with
unit mean. We also assume that $h_{ij}$'s do not change within one time-slot. We
denote the SIR at the $i$-th receiver  as $\textrm{SIR}_i^{(0)}$, which   is
given by
\begin{equation}
\textrm{SIR}_i^{(0)}=\frac{h_{ii}d^{-\alpha}}{\sum_{x_j\in \Phi_0, j\neq i}
h_{ji}l_{ji}}. \label{eq: SIR_0_pilot}
\end{equation}
Note that for the reference scheme without  transmission scheduling,
$\textrm{SIR}_i^{(0)}$ gives the received SIR level at the  $i$-th receiver for
the data transmission of transmitter $i$. As a result, in the reference scheme,
the data transmission of transmitter $i$  is \emph{successful}  if
$\textrm{SIR}_i^{(0)}\!\geq \!\beta$ is satisfied.

Unlike the reference scheme, in the proposed scheme, $\textrm{SIR}_i^{(0)}$
only gives the received SIR level at the  $i$-th receiver in the initial probing
phase P-Phase~$0$. We then denote the point process formed by the retained
transmitters in P-Phase $k$ with $1\leq k \leq N-1$, or the D-Phase with $k=N$, as $\Phi_k$. 
We also denote  $\textrm{SIR}_i^{(k)}$ as the
received SIR at the $i$-th receiver in  $\Phi_k$.  Clearly, we have
$\Phi_k=\{x_i\in \Phi_{k-1}: \textrm{SIR}_i^{(k-1)}\geq \gamma_k\}$, where the
number of transmitters in $\Phi_k$ is reduced as compared to that in
$\Phi_{k-1}$. Thus, it is easy to verify that  $\textrm{SIR}_i^{(k)}\!\geq
\!\textrm{SIR}_i^{(k-1)}$ for  any given $\gamma_k\!\geq \!0$, $\forall i\in
\Phi_{k-1} \cap \Phi_{k}$.  Moreover, similar to $\textrm{SIR}_i^{(0)}$, for any
$\Phi_k$, $k\in \{1,..., N\}$,   we can express  $\textrm{SIR}_i^{(k)}$ as
\begin{equation}
\textrm{SIR}_i^{(k)}=\frac{h_{ii}d^{-\alpha}}{\sum_{x_j\in \Phi_k,  j\neq i}
h_{ji}l_{ji}},~~ k\in \{1,...,N\}. \label{eq: SIR_k}
\end{equation}
It is worth noting that  due to the SIR-based scheduling, the transmitters are
not retained independently in $\Phi_k$. Thus, unlike $\textrm{SIR}_i^{(0)}$  in
(\ref{eq: SIR_0_pilot}), which is  determined by the homogeneous PPP $\Phi_0$,
$\textrm{SIR}_i^{(k)}$ in  (\ref{eq: SIR_k})  is determined by the
\emph{non-PPP} $\Phi_k$  in general \cite{Stoyan.SG.95}.  For the proposed
scheme,  the data  transmission of transmitter~$i$  is \emph{successful} if
$\textrm{SIR}_i^{(k-1)}\!\geq \!\gamma_k$, $\forall k\in \{1,..., N\}$, and
$\textrm{SIR}_i^{(N)}\! \geq \!\beta$ are all satisfied.

\subsection{Spatial Capacity}
Due to the stationarity of the homogeneous PPP $\Phi_0$,  it is easy to verify
that $\Phi_k$, $\forall k\in \{1,...N\}$,  is also stationary
\cite{FadingScheduling}.
We thus consider a typical  pair of transmitter and receiver in this paper.
Without loss of generality, we  assume that the typical  receiver is  located
at the origin. The typical pair of transmitter and receiver is named pair 0,
i.e., $i=0$.
Denote the {\it successful transmission probability} of the typical pair in the
data transmission phase of the proposed scheme with $N$ probing phases or  the
reference  scheme as $\mathcal{P}_{0}^{p,N}$ or $\mathcal{P}_{0}^{r}$,
respectively. We thus have
\begin{align}
\mathcal{P}_{0}^{p,N}&= \mathbb{P}(\textrm{SIR}_0^{(0)}\geq
\gamma_1,...,\textrm{SIR}_0^{(N-1)}\geq \gamma_N, \textrm{SIR}_0^{(N)}\geq
\beta).  \label{eq: P_succ_p_def} \\
\mathcal{P}_{0}^{r}&= \mathbb{P}(\textrm{SIR}_0^{(0)}\geq \beta).  \label{eq:
P_succ_r_def}
\end{align}

We adopt  {\it spatial capacity} as our performance metric, which is defined as
the spatial  density of successful transmissions, or more specifically the
average number of transmitters with successful data transmission  per unit area.
Considering the effective data transmission time in a time slot,
we thus define the spatial capacity by the proposed scheme with $N$ probing phases  
and the reference scheme as $\mathcal{C}^{p,N}$ and $\mathcal{C}^{r}$,
respectively, given by
\begin{align}
\mathcal{C}^{p,N} \triangleq & \frac{T-N\tau}{T} \lambda_0
\mathcal{P}_{0}^{p,N},  \label{eq: C_p_def} \\
\mathcal{C}^{r} \triangleq & \lambda_0 \mathcal{P}_{0}^{r}.  \label{eq: C_r_def}
\end{align}

For the reference scheme, it is noted that $\mathcal{P}^{r}$, given in
(\ref{eq: P_succ_r_def}), is the complementary cumulative distribution function
(CCDF) of $\textrm{SIR}_0^{(0)}$ taken at the value of  $\beta$. We then have
the following proposition.
\begin{proposition} \label{proposition: SIR_1}
The successful transmission probability   in the reference  scheme is
\begin{equation}
\mathcal{P}_{0}^{r}= \exp (-\pi \lambda_0 d^2 \beta^{\frac{2}{\alpha}} \rho ),
\label{eq: suc_prob_stage_1}
\end{equation}
where $\rho =\int_{0}^{\infty}\! \frac{1}{1+v^{\alpha\!/\!2}}\, dv$.
When  $\alpha=4$, we have $\rho = \frac{\pi}{2}$.
\end{proposition}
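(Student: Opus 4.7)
The plan is a textbook Laplace-transform calculation for the interference in a Poisson network with Rayleigh fading, but I will lay out the steps explicitly. Since the typical receiver sits at the origin, the path loss from interferer $j$ simplifies to $l_{j0}=|x_j|^{-\alpha}$, and (\ref{eq: SIR_0_pilot}) reads
\begin{equation*}
\textrm{SIR}_0^{(0)}=\frac{h_{00}\,d^{-\alpha}}{I_0},\qquad I_0=\sum_{x_j\in\Phi_0,\,j\neq 0}h_{j0}\,|x_j|^{-\alpha}.
\end{equation*}
My first step is to exploit the fact that $h_{00}\sim\mathrm{Exp}(1)$ to turn the tail probability into a Laplace transform. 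Conditioning on $I_0$ and on the point pattern,
\begin{equation*}
\mathcal{P}_0^{r}=\mathbb{P}\!\left(h_{00}\ge \beta d^{\alpha}I_0\right)=\mathbb{E}\!\left[\exp(-\beta d^{\alpha}I_0)\right]=\mathcal{L}_{I_0}\!\bigl(\beta d^{\alpha}\bigr).
\end{equation*}

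Next I would compute $\mathcal{L}_{I_0}(s)$ via the probability generating functional (PGFL) of the homogeneous PPP $\Phi_0$ with density $\lambda_0$, together with Slivnyak's theorem (conditioning on a point at $r_0$ does not change the law of the rest of $\Phi_0$). Averaging first over the fading variables $\{h_{j0}\}$, which are i.i.d.\ $\mathrm{Exp}(1)$ and independent of $\Phi_0$, gives
\begin{equation*}
\mathcal{L}_{I_0}(s)=\exp\!\left(-\lambda_0\int_{\mathbb{R}^2}\!\left(1-\mathbb{E}_h\!\bigl[e^{-sh|x|^{-\alpha}}\bigr]\right)dx\right)=\exp\!\left(-\lambda_0\int_{\mathbb{R}^2}\frac{dx}{1+|x|^{\alpha}/s}\right).
\end{equation*}

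The third step is to evaluate the integral in polar coordinates and recognize the constant $\rho$. Writing $\int_{\mathbb{R}^2}\frac{dx}{1+|x|^{\alpha}/s}=2\pi\int_0^\infty\frac{r\,dr}{1+r^{\alpha}/s}$ and substituting $v=r^2/s^{2/\alpha}$ yields
\begin{equation*}
2\pi\int_0^{\infty}\!\frac{r\,dr}{1+r^{\alpha}/s}=\pi s^{2/\alpha}\!\int_0^{\infty}\frac{dv}{1+v^{\alpha/2}}=\pi s^{2/\alpha}\rho.
\end{equation*}
Plugging $s=\beta d^{\alpha}$ so that $s^{2/\alpha}=\beta^{2/\alpha}d^{2}$ gives the claimed $\mathcal{P}_0^{r}=\exp(-\pi\lambda_0 d^{2}\beta^{2/\alpha}\rho)$. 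Finally, for $\alpha=4$ the substitution $v\mapsto\tan\phi$ (or just recognizing the arctangent integral) gives $\rho=\int_0^{\infty}\frac{dv}{1+v^{2}}=\pi/2$.

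There is really no conceptual obstacle here; the only mildly delicate point is the justification for using Slivnyak to ignore the conditioning on the typical transmitter at $x_0$ (and for ``removing'' the $j=0$ term) when applying the PGFL, and the interchange of expectations over fading and over the point process, both of which are standard once we assume independence of $\{h_{ji}\}$ from $\Phi_0$. The bulk of the work is the change-of-variables that isolates the dimensionless constant $\rho$, which I would write out carefully so that the $\beta^{2/\alpha}d^{2}$ factor appears transparently and matches the statement.
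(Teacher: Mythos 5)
Your proposal is correct and follows essentially the same route the paper has in mind: the paper omits the proof but states it parallels [30, Theorem 2], i.e., converting the tail probability of the exponentially distributed $h_{00}$ into the Laplace transform of $I_0$, evaluating it via the PGFL of the homogeneous PPP (with Slivnyak's theorem), and isolating the constant $\rho$ by the change of variables $v=r^2/s^{2/\alpha}$, exactly as you do. Your calculation, including the $\alpha=4$ special case $\rho=\pi/2$, is accurate and complete.
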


The proof of Proposition \ref{proposition: SIR_1} is similar to that of [30,
Theorem~2], which is based on the probability generating functional  (PGFL) of
the PPP,   and thus is omitted here.

Since the network interference level in the D-Phase increases
over the initial transmitter density $\lambda_0$,  we find that
$\mathcal{P}_{0}^{r}$ in (\ref{eq: suc_prob_stage_1})  monotonically decreases
over $\lambda_0$ as expected. Moreover, from (\ref{eq: C_r_def}) and  (\ref{eq:
suc_prob_stage_1}), we can  obtain the expression of $\mathcal{C}^{r}$ as
\begin{equation}
\mathcal{C}^{r}=  \lambda_0  \exp (-\pi \lambda_0 d^2 \beta^{\frac{2}{\alpha}}
\rho ). \label{eq: C_r}
\end{equation}
It is observed from  (\ref{eq: C_r})  that  unlike $\mathcal{P}_{0}^{r}$,   the
spatial capacity $\mathcal{C}^{r}$    does not vary monotonically over
$\lambda_0$, since $\mathcal{C}^{r}$ can be benefited by increasing $\lambda_0$
if the resulting interference is acceptable. Moreover, from (\ref{eq:
suc_prob_stage_1}) and (\ref{eq: C_r}), it is also expected  that  both
$\mathcal{P}_{0}^{r}$ and  $\mathcal{C}^{r}$ monotonically decrease  over  the
distance $d$ between each transmitter and receiver pair, due to the reduced
signal power received at the receiver, and decrease over the required SIR level
$\beta$.

Unlike the reference scheme, which is determined by the homogeneous PPP
$\Phi_0$, the proposed scheme is jointly determined by $\Phi_0$ and a sequence
of non-PPPs $\{\Phi_k\}$, $1\leq k\leq N$, where the resulting SIR distributions
are correlated. Therefore, it is very difficult to analyze/characterize the
spatial capacity of the proposed scheme with $N$ probing phases. To start up,
in the next section, we focus on a simple  case with single-stage probing
($N=1$) for some insightful results.

\section{SIR-Threshold based Scheme with Single-Stage Probing}
In this section, we consider the proposed scheme with single-stage probing,
i.e., $N=1$. In this case, there is only one round of SIR-based scheduling,
which is implemented with the threshold $\gamma_1$.
For notational simplicity, for the case of $N=1$, we omit the superscript $N$
and use $\mathcal{P}_{0}^{p}$  and $\mathcal{C}^{p}$ to represent the successful
transmission probability and the spatial capacity of the typical transmitter,
respectively.
Based on (\ref{eq: P_succ_p_def}),  the successful transmission probability for
the case of $N=1$ is reduced to
\begin{align}
\mathcal{P}_{0}^{p}&= \mathbb{P}(\textrm{SIR}_0^{(0)}\geq \gamma_1,
\textrm{SIR}_0^{(1)}\geq \beta)  \label{eq: P_succ_p_1} \\
 &=\mathbb{P}(\textrm{SIR}_0^{(0)}\geq \gamma_1) \mathbb{P}(
\textrm{SIR}_0^{(1)}\geq \beta| \textrm{SIR}_0^{(0)}\geq \gamma_1). \label{eq:
P_succ_p_1_cond}
\end{align}
Moreover, when $N=1$,  the effective data transmission time for the proposed
scheme is $\frac{T-\tau}{T}$. Since $\tau\ll T$,  we assume  the single-stage
probing overhead is negligible; and thus, the effective data transmission time
becomes 1 as the reference scheme. Consequently, based on (\ref{eq: C_p_def}),
we can express the spatial capacity  $\mathcal{C}_{0}^{p}$ as
\begin{align}
\mathcal{C}^{p} = & \lambda_0 \mathcal{P}_{0}^{p}.  \label{eq: C_p_1}
\end{align}
Furthermore, by substituting  (\ref{eq: P_succ_p_1_cond}) to (\ref{eq: C_p_1}),
we  can express $\mathcal{C}^{p}$ alternatively as
\begin{align}
\mathcal{C}^{p} =&  \lambda_0 \mathbb{P}(\textrm{SIR}_0^{(0)}\geq \gamma_1)
\mathbb{P}( \textrm{SIR}_0^{(1)}\geq \beta| \textrm{SIR}_0^{(0)}\geq \gamma_1)
\nonumber \\
=&  \lambda_1 \mathbb{P}( \textrm{SIR}_0^{(1)}\geq \beta|
\textrm{SIR}_0^{(0)}\geq \gamma_1)
\end{align}
where $\lambda_1\!=\!\lambda_0 \mathbb{P}(\textrm{SIR}_0^{(0)}\!\geq
\!\gamma_1)$ is the density of $\Phi_1$ in the D-Phase, with
$\lambda_1\!\leq \!\lambda_0$. Based on Proposition \ref{proposition: SIR_1}, by
replacing $\beta$ with $\gamma_1$, it is easy to find that
\begin{equation}
\lambda_1 =  \lambda_0 \exp\big(-\pi \lambda_0 d^2
\gamma_1^{\frac{2}{\alpha}}\rho\big). \label{eq: lambda_1_expression}
\end{equation}

In the following two subsections, we  compare the spatial capacity of the two
considered schemes, and  characterize $\mathcal{C}^{p}$ for the proposed scheme.

\subsection{Spatial Capacity Comparison and Closed-form Characterization with
$\gamma_1=0$ and $\gamma_1\geq \beta$} \label{section: exact_analysis_single}
In this subsection, we  compare the spatial capacity of the proposed scheme with
that of the reference scheme.  We then characterize the spatial capacity
$\mathcal{C}^{p}$ for the proposed scheme and obtain closed-form expressions for
the cases of $\gamma_1=0$ and $\gamma_1\geq \beta$.

First, from (\ref{eq: C_r_def}) and (\ref{eq: C_p_1}), to compare
$\mathcal{C}^{p}$ and $\mathcal{C}^{r}$, the key is to compare
$\mathcal{P}_{0}^{p}$  and $\mathcal{P}_{0}^{r}$. In the reference  scheme,
denote the total interference power received at the typical receiver  as
$I_0=\sum_{x_i\in \Phi_0,  i\neq 0} h_{i0}l_{i0}$. In the proposed scheme,  the
received total interference power at the typical receiver in P-Phase~0 is thus
$I_0$, while that in the D-Phase is given by $I_1=\sum_{x_i\in
\Phi_1,  i\neq 0} h_{i0}l_{i0}$.
For any $\gamma_1\geq 0$,  we have $I_0\geq I_1$ since  $\Phi_1 \subseteq
\Phi_0$,  and thus $\textrm{SIR}_i^{(1)}\geq \textrm{SIR}_i^{(0)}$. As a result,
by changing over
the value of $\gamma_1 \! \in  \![0,\infty)$, we obtain the following
proposition.

\begin{proposition} \label{proposition: relationship_C1_C2}

Given the  required SIR level $\beta \! >  \! 0$, for any $\gamma_1  \! \in
\![0,\infty)$, we have
\begin{equation}
  \left\{
   \begin{array}{l}
    \mathcal{C}^{p} > \mathcal{C}^{r},~\text{if}~0<\gamma_1<\beta
\textrm{~\big(\emph{conservative} transmission regime\big)}\\

\mathcal{C}^{p}=\mathcal{C}^{r},~\text{if}~\gamma_1=0~\text{or}~\gamma_1=\beta
\textrm{~\big(\emph{neutral} transmission regime\big)}\\
    \mathcal{C}^{p} < \mathcal{C}^{r},~\text{if}~\gamma_1>\beta
\textrm{~\big(\emph{aggressive} transmission regime\big)}.
   \end{array}
  \right.   \label{spatial throughput_tau}
\end{equation}
\end{proposition}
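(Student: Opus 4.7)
The plan is to reduce the three cases to direct comparisons of the successful transmission probabilities $\mathcal{P}_0^{p}$ and $\mathcal{P}_0^{r}$, since with $N=1$ and negligible probing overhead both spatial capacities carry the same $\lambda_0$ prefactor, namely $\mathcal{C}^{p}=\lambda_0\mathcal{P}_0^{p}$ and $\mathcal{C}^{r}=\lambda_0\mathcal{P}_0^{r}$ by (\ref{eq: C_p_1}) and (\ref{eq: C_r_def}). The central structural fact I will invoke repeatedly is already pointed out just before the statement: because $\Phi_1\subseteq\Phi_0$ for any $\gamma_1\geq 0$, the interference shrinks, so $\textrm{SIR}_0^{(1)}\geq \textrm{SIR}_0^{(0)}$ almost surely whenever the typical transmitter survives to the D-Phase. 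Everything else follows by set-inclusion arguments on the relevant SIR events.

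For the neutral regime, I would first dispatch $\gamma_1=0$: then $\Phi_1=\Phi_0$ almost surely, so $\textrm{SIR}_0^{(1)}=\textrm{SIR}_0^{(0)}$ and (\ref{eq: P_succ_p_1}) collapses to $\mathbb{P}(\textrm{SIR}_0^{(0)}\geq\beta)=\mathcal{P}_0^{r}$. For $\gamma_1=\beta$, the event $\{\textrm{SIR}_0^{(0)}\geq\beta\}$ automatically forces $\textrm{SIR}_0^{(1)}\geq\textrm{SIR}_0^{(0)}\geq\beta$, so the two events in (\ref{eq: P_succ_p_1}) coincide with the single event defining $\mathcal{P}_0^{r}$. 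For the aggressive regime $\gamma_1>\beta$, I would use the chain of inclusions
\begin{equation*}
\{\textrm{SIR}_0^{(0)}\geq\gamma_1,\ \textrm{SIR}_0^{(1)}\geq\beta\}\ \subseteq\ \{\textrm{SIR}_0^{(0)}\geq\gamma_1\}\ \subsetneq\ \{\textrm{SIR}_0^{(0)}\geq\beta\},
\end{equation*}
giving $\mathcal{P}_0^{p}\leq\mathcal{P}_0^{r}$, with strict inequality because $\textrm{SIR}_0^{(0)}$ has a continuous distribution on $(0,\infty)$ under Rayleigh fading, so $\mathbb{P}(\beta\leq \textrm{SIR}_0^{(0)}<\gamma_1)>0$.

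The conservative regime $0<\gamma_1<\beta$ is the interesting case. Here the key observation is the reverse inclusion
\begin{equation*}
\{\textrm{SIR}_0^{(0)}\geq\beta\}\ \subsetneq\ \{\textrm{SIR}_0^{(0)}\geq\gamma_1,\ \textrm{SIR}_0^{(1)}\geq\beta\},
\end{equation*}
which holds because $\textrm{SIR}_0^{(0)}\geq\beta>\gamma_1$ guarantees survival into $\Phi_1$, which in turn yields $\textrm{SIR}_0^{(1)}\geq\textrm{SIR}_0^{(0)}\geq\beta$. Thus $\mathcal{P}_0^{p}\geq\mathcal{P}_0^{r}$. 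Strict inequality amounts to exhibiting a positive-probability event on which $\gamma_1\leq \textrm{SIR}_0^{(0)}<\beta$ but the SIR gain from thinning is large enough to push $\textrm{SIR}_0^{(1)}$ above $\beta$.

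The main obstacle I anticipate is making this strict inequality rigorous: it requires arguing that there is a positive probability that enough interferers satisfy $\textrm{SIR}_j^{(0)}<\gamma_1$ to be dropped and thereby lift $\textrm{SIR}_0^{(0)}$ from below $\beta$ to above $\beta$. I would make this precise by conditioning on a specific configuration — for instance, restricting to events where one dominant interferer contributes most of $I_0$, its own received SIR falls below $\gamma_1$, and its removal brings $I_1$ down sufficiently — and then showing that this configuration has positive probability because the underlying fading variables $h_{ji}$ are independent unit-mean exponentials and the PPP $\Phi_0$ places a dominant interferer close to the origin with positive probability. Once such a set is exhibited, strict inequality of probabilities follows and the proof is complete.
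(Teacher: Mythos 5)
Your proposal is correct and shares the paper's driving observation (since $\Phi_1\subseteq\Phi_0$ one has $I_1\leq I_0$ and hence $\textrm{SIR}_0^{(1)}\geq\textrm{SIR}_0^{(0)}$), and your treatment of the neutral and aggressive regimes is essentially the same content as the paper's; where you genuinely diverge is in the conservative regime $0<\gamma_1<\beta$. The paper works with the ratio $\mathcal{C}^{p}/\mathcal{C}^{r}$ written via the conditional factorization in (\ref{eq: P_succ_p_1_cond}), establishes the identity that the analogous ratio with $I_1$ replaced by $I_0$ equals one, and then gets strictness from the claim that $F_{I_1}(x)>F_{I_0}(x)$ for all $x>0$, i.e. strict stochastic dominance of the post-thinning interference, carried through a conditional probability comparison given $\{h_{00}\geq\gamma_1 d^{\alpha}I_0\}$. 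You instead avoid conditioning entirely: the weak inequality $\mathcal{P}_0^{p}\geq\mathcal{P}_0^{r}$ follows from the event inclusion $\{\textrm{SIR}_0^{(0)}\geq\beta\}\subseteq\{\textrm{SIR}_0^{(0)}\geq\gamma_1,\textrm{SIR}_0^{(1)}\geq\beta\}$, and strictness is reduced to exhibiting a positive-probability configuration in which $\gamma_1\leq\textrm{SIR}_0^{(0)}<\beta$ yet the removal of a dominant interferer (whose own SIR falls below $\gamma_1$) pushes $\textrm{SIR}_0^{(1)}$ above $\beta$. This is a legitimate and arguably cleaner route: the set-inclusion step is immediate, and your explicit-configuration argument makes transparent exactly where the probabilistic content lies; note that the paper's own strictness step (pointwise strict CDF dominance, and its transfer to the conditional probability, given that $I_1$ and $I_0$ are dependent and the conditioning event involves $I_0$) is asserted rather than proved and would ultimately rest on the same kind of positive-probability construction you describe. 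The one thing you still owe is to actually carry out that construction (e.g., condition on a PPP configuration with a single nearby dominant interferer and use that the independent exponential fading variables have full support to place $h_{00}$, the dominant cross-fading term, and the dominant interferer's own SIR in the required ranges simultaneously); as written it is a plan rather than a proof, but it is the right plan and completes the argument once written out.
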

\begin{proof}
Please refer to Appendix A.
\end{proof}

\begin{remark}
 Compared to the spatial capacity of the reference  scheme, Proposition
\ref{proposition: relationship_C1_C2} shows that for the proposed scheme with
SIR-threshold based scheduling,  due to the reduced interference level in the
D-Phase, the spatial capacity is improved in the conservative
transmission regime with  $0\!<\! \gamma_1\!<\!\beta$. However,  in the case of
the aggressive transmission regime with $\gamma_1\!>\!\beta$ , where the
transmitters that are able to transmit successfully in the D-Phase may also be removed from transmission, the  retained transmitters in the
D-Phase are  {\it overly} reduced. Consequently, the spatial
capacity is reduced in the aggressive transmission regime.  It is also noted
that in the  neutral transmission regime with  $\gamma_1\!=\!0$ or
$\gamma_1\!=\!\beta$,  the spatial capacity is identical for the two  schemes.
At last,  it is   worth noting that Proposition \ref{proposition:
relationship_C1_C2} holds regardless of
the specific channel fading distribution and/or transmitter location
distribution.
\end{remark}

Next, we  characterize the spatial capacity $ \mathcal{C}^{p}$ for the proposed
scheme with $N=1$. We focus on deriving the successful transmission probability
$\mathcal{P}_{0}^{p}$  in  (\ref{eq: P_succ_p_1}). Unlike  $\mathcal{P}_{0}^{r}$
 in (\ref{eq: P_succ_r_def}), which is given by the marginal CCDF of
$\textrm{SIR}_0^{(0)}$ taken at value $\beta$, $\mathcal{P}_{0}^{p}$ is given by
the \emph{joint} CCDF of  $\textrm{SIR}_0^{(0)}$ and $\textrm{SIR}_0^{(1)}$
taken at values $(\gamma_1,\beta)$.
In the following, we consider  three cases   $\gamma_1 =0 $,  $\gamma_1
\geq\beta$, and $0<\gamma_1<\beta$, and find closed-form spatial capacity
expressions for both cases of  $\gamma_1 =0 $ and  $\gamma_1 \geq\beta$.

Specifically, for the simple  case with $\gamma_1=0$, we can infer from
Proposition \ref{proposition: relationship_C1_C2} directly that
$\mathcal{C}^{p}=\mathcal{C}^{r}$, which is given in (\ref{eq: C_r}). For the
case of  $\gamma_1 \geq\beta$, since
$\textrm{SIR}_0^{(1)}>\textrm{SIR}_0^{(0)}$, we have $\mathbb{P}(
\textrm{SIR}_0^{(1)}\geq \beta| \textrm{SIR}_0^{(0)}\geq \gamma_1)=1$.
According to (\ref{eq: P_succ_p_1_cond}), we thus obtain
$\mathcal{P}_{0}^{p}=\mathbb{P}(\textrm{SIR}_0^{(0)}\geq \gamma_1)$ in this
case. By replacing  $\beta$ with $\gamma_1$ in Proposition \ref{proposition:
SIR_1}, we further obtain that $\mathcal{P}_{0}^{p}=\exp (-\pi \lambda_0 d^2
\gamma_1^{\frac{2}{\alpha}} \rho)$.  As a result,  based on (\ref{eq: C_p_1}),
we can express $ \mathcal{C}^{p}$ for the case of $\gamma_1 \geq\beta$ as
\begin{equation}
 \mathcal{C}^{p}=\lambda_0  \exp (-\pi \lambda_0 d^2 \gamma_1^{\frac{2}{\alpha}}
\rho ). \label{eq: SC_gamma_higher_beta}
\end{equation}
Similar to $ \mathcal{C}^{r}$ given in (\ref{eq: C_r}), it is observed that $
\mathcal{C}^{p}$ for both cases of $\gamma_1 =0 $ and  $\gamma_1 \geq\beta$ does
not vary monotonically over $\lambda_0$, but monotonically decreases over the
distance $d$ between each transmitter and receiver pair. Moreover, unlike  $
\mathcal{C}^{r}$  and $ \mathcal{C}^{p}$ for $\gamma_1=0$, $ \mathcal{C}^{p}$
for $\gamma_1\geq \beta$  is not related to the required SIR level $\beta$ any
more, since all the retained transmitters in the D-Phase meet
the condition  $\textrm{SIR}_i^{(1)}\geq \beta$ in this case.

However, for the case of $0\!<\! \gamma_1 \!<\! \beta$, $\mathcal{P}_{0}^{p}$
cannot be simply expressed by a marginal CCDF of $\textrm{SIR}_0^{(0)}$ as in
the above two cases. Moreover, from (\ref{eq: P_succ_p_1}), due to the
correlation between $\textrm{SIR}_0^{(0)}$ and $\textrm{SIR}_0^{(1)}$ as well as
the underlying non-PPP $\Phi_1$ that determines $\textrm{SIR}_0^{(1)}$, it is
very difficult, if not impossible, to find an exact expression of
$\mathcal{P}_{0}^{p}$ and thus $ \mathcal{C}^{p}$ in this case.  As a result, in
the next subsection, we  focus on finding a tight approximate to  $
\mathcal{C}^{p}$  with a tractable expression for the case of $0\!<\! \gamma_1
\!<\! \beta$.

\subsection{Approximate Approaches for Spatial Capacity Characterization with
$0<\gamma_1<\beta$} \label{section: spatial_capacity_characterization}
This subsection focuses on approximating the  spatial capacity of the proposed
scheme for the case of  $0\!<\! \gamma_1 \!<\! \beta$. We first propose a new
approximate approach for $ \mathcal{C}^{p}$ and obtain  an integral-based
expression.  Next, to find a closed-form expression for $ \mathcal{C}^{p}$, we
further approximate the integral-based expression obtained by the proposed
approach. At last, we apply the conventional approximate approach in the
literature and discuss its approximate performance.
The details of the three approximate approaches  are given as follows.

\subsubsection{Proposed   Approximation}
From (\ref{eq: P_succ_p_1}), to find a good approximate to $\mathcal{P}_{0}^{p}$
and thus $\mathcal{C}^{p}$, the key is to find a good approximate to the joint
SIR distributions in $\Phi_0$ and $\Phi_1$. Since $\Phi_1 \subseteq \Phi_0$,  we
first   divide the initial PPP $\Phi_0$ into two \emph{disjoint non-PPPs}: one
is $\Phi_1$, and the other is its complementary set $\Phi_1^{c}=\Phi_0-\Phi_1$,
which is the point process formed by the non-retained transmitters in the D-Phase. We denote the density of $\Phi_1^{c}$ as
$\lambda_1^c\!=\!\lambda_0-\lambda_1$. Clearly, $\Phi_1$ and $\Phi_1^c$ are
mutually dependent. Denote  the received SIR level at the typical receiver in
$\Phi_1^{c}$ as $\textrm{SIR}_0^{(1,c)}=h_{00}d^{-\alpha}/\sum_{i\in \Phi_1^c}
h_{i0}l_{i0}$. Since $\Phi_1\cup \Phi_1^{c}\!=\!\Phi_0$ and $\Phi_1\cap
\Phi_1^{c}\!=\!\emptyset$, we have
$1/\textrm{SIR}_0^{(0)}=1\big/\big(\textrm{SIR}_0^{(1)}+\textrm{SIR}_0^{(1,c)}
\big)$. As a result,
(\ref{eq: P_succ_p_1}) can be equally represented   by using the joint
distributions of $\textrm{SIR}_0^{(1)}$ and  $\textrm{SIR}_0^{(1,c)}$.

Next, we  state an assumption,  based on which  we can use a homogeneous PPP to
approximate $\Phi_1$ and $\Phi_1^{c}$, respectively, such that the existing
results on PPP interference distribution in the literature can be applied to
approximate the joint distributions of $\textrm{SIR}_0^{(1)}$ and
$\textrm{SIR}_0^{(1,c)}$.
\begin{assumption}
In the proposed scheme with $N=1$, the transmitters are retained independently
in the D-Phase, with probability
$\mathbb{P}(\textrm{SIR}_0^{(0)}\!\geq \! \gamma_1)$.
\end{assumption}

By applying Assumption 1, we denote  the resulting point processes formed by the
retained and non-retained transmitters in the D-Phase as
$\hat{\Phi}_1$ and $\hat{\Phi}_1^{c}$, respectively. Clearly, both
$\hat{\Phi}_1$ and $\hat{\Phi}_1^{c}$ are homogeneous PPPs. Moreover, the
density of $\hat{\Phi}_1$ or $\hat{\Phi}_1^{c}$ is the same as that of $\Phi_1$
or $\Phi_1^c$, respectively. Since the two homogeneous PPPs $\hat{\Phi}_1$ and
$\hat{\Phi}_1^c$ are disjoint, they are \emph{independent} of each other
\cite{Stoyan.SG.95}. Denote
$\hat{I}_1\!=\!\sum_{i\in\hat{\Phi}_1}\!h_{i0}l_{i0}$ and
$\hat{I}_1^c\!=\!\sum_{i\in\hat{\Phi}_1^{c}}\!h_{i0}l_{i0}$ as the received
interference power at the typical receiver in $\hat{\Phi}_1$ and
$\hat{\Phi}_1^{c}$, respectively. We then use  $f_{\hat{I}_1}(x_1)$ and
$f_{\hat{I}_1^c}(x_2)$ to denote the  probability density functions (pdfs) of
$\hat{I}_1$ and $\hat{I}_1^c$, respectively. The following lemma gives the
general interference pdf in a homogeneous
PPP-based network with Rayleigh fading channels, which is a  well-known result
in the literature (e.g., \cite{Haenggi.book}).

\begin{lemma}\label{lemma: interference_pdf}
 For any homogeneous PPP of density $\lambda\geq 0$, if the channel fading is
Rayleigh distributed, the pdf of the received interference $I$ at the typical
receiver is given by
\begin{equation}
f_I(x)\!=\!\frac{1}{\pi x} \!\sum_{i=1}^{\infty}
\!\frac{(-1)^{i\!+\!1}\Gamma(1\!+\! 2i/ \alpha)\sin(  2\pi i / \alpha)}{i!}
\bigg(\!\frac{\lambda \pi^2 2/ \alpha}{x^{2/ \alpha}\sin( 2\pi / \alpha)}\!
\bigg)^{\!i}. \label{eq: I_pdf_alphaNot4}
\end{equation}
Moreover, when $\alpha=4$, (\ref{eq: I_pdf_alphaNot4}) can be further expressed
in a simpler closed-form  as
\begin{equation}
f_I(x)=\frac{\lambda}{4} \Big(\frac{\pi}{x}\Big)^{3/2}\exp\Big(-\frac{\pi^4
\lambda^2}{16 x}\Big). \label{eq: I_pdf_alpha4}
\end{equation}
\end{lemma}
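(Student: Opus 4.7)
The plan is to first compute the Laplace transform of $I$ via the probability generating functional (PGFL) of the homogeneous PPP, and then invert it. Writing $I = \sum_{x_i \in \Phi} h_i |x_i|^{-\alpha}$, independence of the Rayleigh marks $h_i \sim \mathrm{Exp}(1)$ yields $\mathbb{E}[e^{-s h_i |x_i|^{-\alpha}} \mid x_i] = (1 + s|x_i|^{-\alpha})^{-1}$, so the PGFL gives
\begin{equation*}
\mathcal{L}_I(s) = \exp\!\Bigl(-\lambda \int_{\mathbb{R}^2}\!\Bigl(1 - \tfrac{1}{1 + s|x|^{-\alpha}}\Bigr)\,dx\Bigr).
\end{equation*}
Passing to polar coordinates, substituting $u = r^{\alpha}/s$, and invoking the Beta-function identity $\int_0^\infty u^{2/\alpha - 1}/(1+u)\,du = \pi/\sin(2\pi/\alpha)$ collapses the exponent to $-c\,s^{2/\alpha}$ with $c = \lambda \pi^2 (2/\alpha) / \sin(2\pi/\alpha)$, so that $\mathcal{L}_I(s) = \exp(-c\,s^{2/\alpha})$.

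Next, I would invert $\mathcal{L}_I$. For general $\alpha > 2$ the index $\beta := 2/\alpha \in (0,1)$, so $I$ is a positive $\beta$-stable random variable up to the scale factor $c^{\alpha/2}$. The plan is to appeal to the classical Pollard/Feller series representation of the one-sided stable density,
\begin{equation*}
f_\beta(y) = \frac{1}{\pi y} \sum_{i=1}^{\infty} \frac{(-1)^{i+1}}{i!}\, \Gamma(1 + \beta i)\, \sin(\pi \beta i)\, y^{-\beta i},
\end{equation*}
which is the standard inversion of $\exp(-s^\beta)$ (obtainable either by Bromwich contour integration around the branch cut on $(-\infty,0)$ or by Mellin--Barnes inversion of the Taylor series). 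Setting $I = c^{\alpha/2} Y$ with $Y$ having density $f_{2/\alpha}$ gives $f_I(x) = c^{-\alpha/2} f_{2/\alpha}(x\,c^{-\alpha/2})$, and absorbing the factors $c^i$ back into the bracketed term produces precisely the series in (\ref{eq: I_pdf_alphaNot4}).

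For the special case $\alpha = 4$ the Laplace transform collapses to $\exp(-\tfrac{\pi^2 \lambda}{2}\sqrt{s})$, the Laplace transform of a L\'evy (one-sided $\tfrac{1}{2}$-stable) distribution. Rather than summing the series, I would apply the classical pair $\mathcal{L}^{-1}\{e^{-a\sqrt{s}}\}(x) = \frac{a}{2\sqrt{\pi}}\, x^{-3/2} e^{-a^2/(4x)}$ with $a = \pi^2\lambda/2$, which yields directly $f_I(x) = \tfrac{\lambda}{4}(\pi/x)^{3/2} \exp(-\pi^4 \lambda^2/(16x))$ as in (\ref{eq: I_pdf_alpha4}).

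The main obstacle is the general-$\alpha$ inversion: the stretched exponential $\exp(-c\,s^{2/\alpha})$ has no elementary inverse Laplace transform for generic $\alpha$, so one must either invoke the Pollard series as a known fact about stable densities or rederive it via a contour argument on the principal branch of $s^{2/\alpha}$. The PGFL computation and the $\alpha=4$ inversion are otherwise routine textbook manipulations.
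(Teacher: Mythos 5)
Your derivation is correct: the PGFL computation gives $\mathcal{L}_I(s)=\exp\bigl(-\tfrac{2\pi^2\lambda}{\alpha\sin(2\pi/\alpha)}s^{2/\alpha}\bigr)$, the Pollard/Feller series for the one-sided $2/\alpha$-stable density with the scale factor $c^{\alpha/2}$ reproduces (\ref{eq: I_pdf_alphaNot4}) exactly, and the L\'evy pair $\mathcal{L}^{-1}\{e^{-a\sqrt{s}}\}(x)=\tfrac{a}{2\sqrt{\pi}}x^{-3/2}e^{-a^2/(4x)}$ with $a=\pi^2\lambda/2$ gives (\ref{eq: I_pdf_alpha4}). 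The paper itself supplies no proof of this lemma, citing it as a known result from the stochastic-geometry literature, and your argument is precisely the standard Laplace-transform/stable-distribution derivation underlying that cited result, so there is nothing to reconcile.
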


As a result, based on Lemma \ref{lemma: interference_pdf}, by  substituting
$\lambda\!=\!\lambda_1$  to (\ref{eq: I_pdf_alphaNot4}) and (\ref{eq:
I_pdf_alpha4}), we can obtain $f_{\hat{I}_1}(x_1)$ for  the cases of general
$\alpha$ and $\alpha\!=\!4$, respectively. Similarly, with
$\lambda\!=\!\lambda_1^c$, from (\ref{eq: I_pdf_alphaNot4}) and (\ref{eq:
I_pdf_alpha4}) we can obtain $f_{\hat{I}_1^c}(x_2)$ for  general $\alpha$ and
$\alpha\!=\!4$, respectively. Therefore, by approximating  $\Phi_1$ and
$\Phi_1^{c}$ by $\hat{\Phi}_1$ and $\hat{\Phi}_1^{c}$, respectively, we can
easily approximate the joint  distribution of $\textrm{SIR}_0^{(1)}$ and
$\textrm{SIR}_0^{(1,c)}$ based on the interference pdfs $f_{\hat{I}_1}(x_1)$ and
$f_{\hat{I}_1^c}(x_2)$, and thereby obtain an integral-based approximate to
$\mathcal{P}_{0}^{p}$ in the following proposition.
\begin{proposition}\label{proposition: prob_suc_stage2}
The successful transmission probability by  the proposed scheme  for the case of
$0<\gamma_1<\beta$ is approximated as
\begin{align}
 \!\!\!\mathcal{P}_{0}^{p} \!
\approx \!\! \int_{0}^{\infty}\!\!\! e^{-h_{00}} \!\!
\int_{0}^{\frac{h_{00}}{\beta d^{\alpha}}} \!\!\!\! f_{\hat{I}_1}(x_1) \! \!
\int_{0}^{\frac{h_{00}}{\gamma_1 d^{\alpha}}-x_1} \!\!\!\!\!
f_{\hat{I}_1^c}(x_2)\,dx_2 \,dx_1 \,dh_{00}. \label{eq: prob_suc_stage2}
\end{align}
\end{proposition}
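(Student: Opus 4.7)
The plan is to derive the approximate expression by translating the event $\{\textrm{SIR}_0^{(0)}\ge\gamma_1,\,\textrm{SIR}_0^{(1)}\ge\beta\}$ into conditions on the partitioned interference $(I_1,I_1^{c})$, and then invoking Assumption 1 together with Lemma \ref{lemma: interference_pdf} to replace the non-PPPs $(\Phi_1,\Phi_1^{c})$ by independent homogeneous PPPs.

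First, I would rewrite the target probability in a form amenable to the decomposition. Since $\Phi_1$ and $\Phi_1^{c}$ partition $\Phi_0$, the total interference splits as $I_0=I_1+I_1^{c}$. Using (\ref{eq: SIR_0_pilot}) and (\ref{eq: SIR_k}), the two SIR conditions become
\begin{equation}
I_1\le \tfrac{h_{00}}{\beta d^{\alpha}},\qquad I_1+I_1^{c}\le \tfrac{h_{00}}{\gamma_1 d^{\alpha}},\nonumber
\end{equation}
and since $\gamma_1<\beta$ the second upper limit exceeds the first, so for any admissible $x_1=I_1$ the complementary interference $I_1^{c}$ still has a strictly positive admissible range $[0,\,h_{00}/(\gamma_1 d^{\alpha})-x_1]$. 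This justifies the nested integration limits that appear in (\ref{eq: prob_suc_stage2}).

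Next, I would condition on the typical direct-channel fading $h_{00}$, which is unit-mean exponential with density $e^{-h_{00}}$ independent of the interference field. The key approximation step then comes in: by Assumption 1, the retention of each transmitter is treated as independent with probability $\mathbb{P}(\textrm{SIR}_0^{(0)}\!\ge\!\gamma_1)$, so the coloring theorem lets us approximate $\Phi_1$ and $\Phi_1^{c}$ by $\hat\Phi_1$ and $\hat\Phi_1^{c}$, which are homogeneous PPPs with densities $\lambda_1$ and $\lambda_1^{c}=\lambda_0-\lambda_1$, respectively. Because $\hat\Phi_1$ and $\hat\Phi_1^{c}$ are disjoint PPPs obtained by independent thinning of $\Phi_0$, they are independent, and hence the induced interferences $\hat I_1$ and $\hat I_1^{c}$ are independent with marginal densities $f_{\hat I_1}$ and $f_{\hat I_1^{c}}$ given by Lemma \ref{lemma: interference_pdf} (with $\lambda=\lambda_1$ and $\lambda=\lambda_1^{c}$, respectively).

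Finally, I would combine these ingredients: using independence of $\hat I_1$ and $\hat I_1^{c}$ conditional on $h_{00}$, the joint conditional probability factors as a product of marginal CDFs integrated against the two densities, producing the inner double integral in $(x_1,x_2)$ over the region dictated by Step 1. Averaging over $h_{00}$ with its exponential density yields exactly (\ref{eq: prob_suc_stage2}). The main obstacle here is not the integration, which is essentially bookkeeping once independence is in hand, but rather the approximation quality of Assumption 1: the true $\Phi_1$ is a dependent thinning of $\Phi_0$ (points near clusters of other points are preferentially removed together), so replacing it with an i.i.d. thinning and then further treating $\hat\Phi_1$ and $\hat\Phi_1^{c}$ as independent PPPs is the step that must be argued carefully, and whose accuracy is ultimately to be validated by the numerical comparisons promised later in Section III-C.
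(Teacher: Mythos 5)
Your proposal is correct and follows essentially the same route as the paper's proof in Appendix B: translate the two SIR events into $\hat I_1\le h_{00}/(\beta d^{\alpha})$ and $\hat I_1+\hat I_1^{c}\le h_{00}/(\gamma_1 d^{\alpha})$ via the partition $\Phi_0=\Phi_1\cup\Phi_1^{c}$, invoke Assumption~1 so that the disjoint PPPs $\hat\Phi_1,\hat\Phi_1^{c}$ (and hence $\hat I_1,\hat I_1^{c}$) are independent with pdfs from Lemma~\ref{lemma: interference_pdf}, then condition on the exponential $h_{00}$ and integrate the product density over the nested region. Your added remarks on why the region is non-empty when $\gamma_1<\beta$ and on the accuracy of the independent-thinning approximation are sensible but not needed for the derivation itself.
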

\begin{proof}
Please refer to Appendix B.
\end{proof}

Finally, by multiplying $\lambda_0$ with the right-hand side of (\ref{eq:
prob_suc_stage2}), we obtain an integral-based approximate to  $\mathcal{C}^{p}$
for the case of $0<\gamma_1<\beta$ as
\begin{equation}
 \mathcal{C}^{p} \approx \lambda_0 \int_{0}^{\infty}\!\!\! e^{-h_{00}} \!\!
\int_{0}^{\frac{h_{00}}{\beta d^{\alpha}}} \!\!\!\! f_{\hat{I}_1}(x_1) \! \!
\int_{0}^{\frac{h_{00}}{\gamma_1 d^{\alpha}}-x_1} \!\!\!\!\!
f_{\hat{I}_1^c}(x_2)\,dx_2 \,dx_1 \,dh_{00}. \label{eq: c_p_proposed}
\end{equation}

Note that  the proposed approximate approach considers the correlation between
$\textrm{SIR}_0^{(0)}$ and $\textrm{SIR}_0^{(1)}$, and only adopts PPP-based
approximation to approximate   $\Phi_1$ and $\Phi_1^{c}$ by $\hat{\Phi}_1$ and
$\hat{\Phi}_1^{c}$, respectively. Since it has been shown in the literature
(e.g., \cite{Hasan.TWC.2007.GuardZone}-\cite{Baccelli.Computer.11}) that such
PPP-based approximation can provide tight approximate to the corresponding
non-PPP, the proposed approximate approach is able to provide tight spatial
capacity approximate to $\mathcal{C}^{p}$ for the case of $0<\gamma_1<\beta$.

\subsubsection{Closed-form Approximation for (\ref{eq: c_p_proposed})}
Although the spatial capacity expression obtained in (\ref{eq: c_p_proposed}) is
easy to integrate, it is not of closed-form.
Thus, based on (\ref{eq: prob_suc_stage2}), we focus  on finding a closed-form
approximate to  $\mathcal{P}_{0}^{p}$ and thus $\mathcal{C}^{p}$.
We first increase the upper limit of $f_{\hat{I}_1^c}(x_2)$ in (\ref{eq:
prob_suc_stage2}) from $\gamma_1 d^{\alpha}-x_1$ to  $\gamma_1 d^{\alpha}$ to
obtain an upper bound for the right-hand side of (\ref{eq: prob_suc_stage2}).
Then by  properly lower-bounding the obtained upper bound based on Chebyshev's
inequality \cite{Chebyshev}, we obtain a closed-form approximate to
$\mathcal{P}_{0}^{p}$, which is shown in the following proposition.
\begin{proposition} \label{proposition: method_2}
Based on the integral-based expression given in  (\ref{eq: prob_suc_stage2}), a
closed-form approximate to $\mathcal{P}_{0}^{p}$ for the case of
$0<\gamma<\beta$ is obtained as
\begin{equation}
 \mathcal{P}_{0}^{p} \approx \exp(-\pi \lambda_1 d^2
\beta^{\frac{2}{\alpha}}\rho) \exp(-\pi \lambda_1^c d^2
\gamma_1^{\frac{2}{\alpha}}\rho). \label{eq: tractable_suc_prob}
\end{equation}
\end{proposition}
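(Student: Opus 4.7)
The plan is to derive the closed-form approximation directly from the integral-based expression in Proposition~\ref{proposition: prob_suc_stage2}, in two stages as the authors indicate: an upper-bound step that decouples the two inner integrals, followed by a Chebyshev-type lower-bound step that decouples the outer expectation over $h_{00}$. The product structure in the final answer strongly suggests this route, since each factor $\exp(-\pi\lambda_1 d^2 \beta^{2/\alpha}\rho)$ and $\exp(-\pi\lambda_1^c d^2 \gamma_1^{2/\alpha}\rho)$ is exactly the Laplace transform of the aggregate interference from a homogeneous PPP with Rayleigh fading, evaluated at $\beta d^\alpha$ and $\gamma_1 d^\alpha$ respectively, so the goal is to manipulate the triple integral into such a product.

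First I would upper-bound the innermost integral by enlarging its domain from $[0, h_{00}/(\gamma_1 d^\alpha)-x_1]$ to $[0, h_{00}/(\gamma_1 d^\alpha)]$. Since $f_{\hat I_1^c}$ is a nonnegative density and $x_1 \geq 0$, this yields an upper bound whose inner $x_2$-integral no longer depends on $x_1$, so the double integral over $(x_1,x_2)$ factorizes as
\begin{equation*}
F_{\hat I_1}\!\bigl(h_{00}/(\beta d^\alpha)\bigr)\,F_{\hat I_1^c}\!\bigl(h_{00}/(\gamma_1 d^\alpha)\bigr),
\end{equation*}
where $F_{\hat I_1}$ and $F_{\hat I_1^c}$ are the CDFs of the independent PPP interferences (independence uses Assumption~1 and the disjointness of $\hat\Phi_1,\hat\Phi_1^c$).

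Next I would observe that both CDFs above are nondecreasing functions of $h_{00}$. Viewing $h_{00}$ as an exponential random variable with density $e^{-h_{00}}$, Chebyshev's sum inequality (the integral form, valid when two functions are comonotone with respect to the same measure) gives
\begin{equation*}
\int_0^\infty e^{-h_{00}} F_{\hat I_1}\!\bigl(\tfrac{h_{00}}{\beta d^\alpha}\bigr) F_{\hat I_1^c}\!\bigl(\tfrac{h_{00}}{\gamma_1 d^\alpha}\bigr)\, dh_{00} \;\geq\; \Bigl(\int_0^\infty\!\! e^{-h_{00}} F_{\hat I_1}\!\bigl(\tfrac{h_{00}}{\beta d^\alpha}\bigr) dh_{00}\Bigr)\Bigl(\int_0^\infty\!\! e^{-h_{00}} F_{\hat I_1^c}\!\bigl(\tfrac{h_{00}}{\gamma_1 d^\alpha}\bigr) dh_{00}\Bigr),
\end{equation*}
which supplies the lower bound on the upper bound and produces the product form in the statement. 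Each marginal integral is then evaluated by Fubini's theorem: conditioning on the interference and using the fact that $h_{00}$ is unit-exponential gives $\int_0^\infty e^{-h_{00}} F_{\hat I_1}(h_{00}/(\beta d^\alpha))\, dh_{00} = \mathbb{E}\bigl[e^{-\beta d^\alpha \hat I_1}\bigr]$, the Laplace transform of the PPP interference $\hat I_1$ at $s=\beta d^\alpha$. By the same PPP/PGFL computation used to obtain Proposition~\ref{proposition: SIR_1} (applied to the homogeneous PPP $\hat\Phi_1$ of density $\lambda_1$), this Laplace transform equals $\exp(-\pi \lambda_1 d^2 \beta^{2/\alpha}\rho)$; an identical calculation on $\hat\Phi_1^c$ of density $\lambda_1^c$ at $s=\gamma_1 d^\alpha$ yields the second factor.

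The main obstacle to present cleanly will be the fact that the derivation combines an upper bound (extending the $x_2$-range) with a lower bound (Chebyshev), so the net direction of the inequality is not determined and the result must be labeled as an approximation rather than a rigorous bound. I would acknowledge this explicitly and argue tightness heuristically: the enlargement of the $x_2$-domain is by $x_1 \leq h_{00}/(\beta d^\alpha)$, which is small compared to $h_{00}/(\gamma_1 d^\alpha)$ when $\gamma_1 \ll \beta$; and the Chebyshev slack is small when one of the two CDFs saturates quickly in $h_{00}$. Numerical validation (presented elsewhere in the paper) will ultimately justify the accuracy of the closed form in the regime $0<\gamma_1<\beta$.
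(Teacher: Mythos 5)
Your proposal matches the paper's own proof essentially step for step: enlarging the innermost integration range to $[0,\,h_{00}/(\gamma_1 d^{\alpha})]$ to factor the double integral, then applying Chebyshev's integral inequality to the two comonotone (nondecreasing in $h_{00}$) CDF factors to decouple the expectation over the exponential $h_{00}$, and finally evaluating each marginal as the PPP interference Laplace transform (equivalently, Proposition \ref{proposition: SIR_1} with $\lambda_1$ and $\lambda_1^c$), which yields exactly (\ref{eq: tractable_suc_prob}). Your explicit acknowledgment that combining an upper bound with a lower bound leaves only an approximation, not a one-sided bound, is consistent with how the paper presents the result.
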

\begin{proof}
Please refer to Appendix C.
\end{proof}

From (\ref{eq: C_p_1}), (\ref{eq: lambda_1_expression}) and (\ref{eq:
tractable_suc_prob}), we  obtain a closed-form approximate to spatial capacity
of the proposed scheme for the case of $0<\gamma_1 < \beta$ as
\begin{align}
\mathcal{C}^{p}  \approx & \lambda_0 \times \exp\big(-\pi \lambda_0 d^2
\gamma_1^{\frac{2}{\alpha}}\rho\big) \times \exp\big[-\pi \lambda_0\exp(-\pi
\lambda_0 d^2 \gamma_1^{\frac{2}{\alpha}}\rho) d^2
\beta^{\frac{2}{\alpha}}\rho\big] \nonumber \\
&\times \exp\big[ \pi \lambda_0 \exp(-\pi \lambda_0 d^2
\gamma_1^{\frac{2}{\alpha}}\rho) d^2 \gamma_1^{\frac{2}{\alpha}}\rho\big].
\label{eq: tractable_sc}
\end{align}

\subsubsection{Conventional  Approximation}
It is noted that the conventional approximate approach in the literature (e.g.,
\cite{Hasan.TWC.2007.GuardZone}-\cite{Baccelli.Computer.11}), which only focuses
on  dealing with  the non-PPP $\Phi_1$,  can often yield a closed-form
expression. Thus, in the following, we apply the  conventional approximate
approach and discuss its approximate performance to  $\mathcal{C}^{p}$.

First, since only the performance in $\Phi_1$ is concerned by the conventional
approximate approach, it   takes $\mathbb{P}\big(\textrm{SIR}_0^{(1)}\geq
\beta\big)$  as the successful transmission probability of the typical
transmitter in the D-Phase.
Next,  the non-PPP $\Phi_1$ is approximated by the  homogeneous PPP
$\hat{\Phi}_1$ under Assumption 1.
We denote the received SIR at the typical receiver in  $\hat{\Phi}_1$ as
$\textrm{SIR}_0^{(\hat{1})}=h_{00}d^{-\alpha}/\sum_{i\in \hat{\Phi}_1}
h_{i0}l_{i0}$. Thus, $\mathbb{P}\big(\textrm{SIR}_0^{(1)}\geq \beta\big)$  is
approximated by
$\mathbb{P}\big(\textrm{SIR}_0^{(\hat{1})}\geq \beta\big)$.  At last, by
adopting  the product of $\lambda_1$ and
$\mathbb{P}\big(\textrm{SIR}_0^{(\hat{1})}\geq \beta\big)$ as an approximate to
the spatial capacity $\mathcal{C}^{p}$,  a  closed-form approximate to
$\mathcal{C}^{p}$ for the case of $0<\gamma_1 < \beta$ is obtained as
\begin{align}
\mathcal{C}^{p} \approx& \lambda_1 \times
\mathbb{P}\big(\textrm{SIR}_0^{(\hat{1})}\geq \beta\big) \label{eq:
conventional} \\
\overset{(a)}{=}& \lambda_0 \exp\big(-\pi \lambda_0 d^2
\gamma_1^{\frac{2}{\alpha}}\rho\big) \exp\Big[-\pi \lambda_0 \exp\big(-\pi
\lambda_0 d^2 \gamma_1^{\frac{2}{\alpha}}\rho\big) d^2
\beta^{\frac{2}{\alpha}}\rho\Big] \label{eq: conventional_result}
\end{align}
where $(a)$ follows by  Proposition \ref{proposition: SIR_1}  and (\ref{eq:
lambda_1_expression}).
Note that since $\lambda_1=\lambda_0 \times
\mathbb{P}\big(\textrm{SIR}_0^{(0)}\geq \gamma_1\big)$, we can rewrite (\ref{eq:
conventional}) as $\mathcal{C}^{p} \approx  \lambda_0 \times
\mathbb{P}\big(\textrm{SIR}_0^{(0)}\geq
\gamma_1\big)\mathbb{P}\big(\textrm{SIR}_0^{(\hat{1})}\geq \beta\big)$ under the
conventional method.
However, according to the definition of $\mathcal{C}^{p}$ for $N=1$,  which is
given in (\ref{eq: P_succ_p_1}) and (\ref{eq: C_p_1}), we have
$\mathcal{C}^{p} =  \lambda_0 \times \mathbb{P}\big(\textrm{SIR}_0^{(0)}\geq
\gamma_1,\textrm{SIR}_0^{(1)}\geq \beta\big)$, where the distribution of
$\textrm{SIR}_0^{(1)}$ is strongly dependent on that of  $\textrm{SIR}_0^{(0)}$
as $\Phi_1 \subseteq \Phi_0$. As a result,    the conventional approximate
approach only focuses on the PPP-based approximate  to $\Phi_1$, but ignores the
dependence between  $\Phi_0$ and $\Phi_1$. Therefore, (\ref{eq: conventional})
does not hold for representing, or reasonably approximating, the spatial
capacity of the proposed scheme. In addition, by comparing (\ref{eq:
tractable_sc}) and (\ref{eq: conventional_result}), it is observed that for the
case of $0<\gamma_1<\beta$, given any $\lambda_0>0$ and $d>0$, the closed-form
spatial capacity obtained  based on the proposed approach is always outperformed
that by the conventional approach.

\subsection{Numerical Results} \label{section: simulation_1}
Numerical results are presented in this subsection. According to the method
described in \cite{Stoyan.SG.95}, we generate a spatial Poisson process, in
which the transmitters are placed uniformly in a square of
$[0\textrm{m},600\textrm{m}]\!\times\![0\textrm{m},600\textrm{m}]$. To take care
of the border effects, we focus on sampling the transmitters that locate in the
interim square of
$[200\textrm{m},400\textrm{m}]\!\times\![200\textrm{m},400\textrm{m}]$.
 We calculate the spatial capacity as the average of the
network capacity over 2000 independent network realizations, where for each network realization,
the network capacity is  evaluated as the ratio of the number of successful
transmitters in the sampling square to the square area of
$4\!\times\!10^2\textrm{m}^2$.
Unless otherwise specified, in this subsection, we set $\alpha\!=\!4$,
$\beta=2.5$, and $d\!=\!10$m. We also observe by simulation that similar
performance can be obtained by using other parameters.

In the following, we first validate our analytical results on the spatial capacity of the proposed scheme and
the reference scheme without scheduling.  To highlight the spatial capacity improvement
performance of the proposed scheme, we then compare the spatial capacity achieved by the propose scheme with that by two existing
distributed scheduling schemes: one is the probability-based scheduling in \cite{Kim.ProbScheld.14}, and the other is the channel-threshold based scheduling in \cite{Weber.IT.07} and \cite{FadingScheduling}. At last, we consider a more practical scenario with SIR estimation and feedback errors, and show the
effects of the SIR errors on the spatial capacity of the proposed scheme.

\subsubsection{Validation of the Spatial Capacity Analysis}
We validate our spatial capacity analysis in Section \ref{section: exact_analysis_single} and Section \ref{section: spatial_capacity_characterization}
for both proposed and reference schemes.

Fig.~\ref{fig:3} shows  the spatial capacity versus the SIR threshold
$\gamma_1$, for both the reference  scheme without transmission scheduling and
the proposed  scheme with  SIR-based scheduling. We set the initial transmitter
density as $\lambda_0\!=\!0.0025/\textrm{m}^2$   in both  schemes. The
analytical spatial capacity of the reference scheme is given in (\ref{eq: C_r}).
By comparing the simulation results for the proposed scheme with the  analytical
results for the reference scheme,  we observe that $\mathcal{C}^r$ is  constant
over $\gamma_1$ as expected. We also observe that 1) when $\gamma_1\!<\!\beta$,
$\mathcal{C}^p\!>\!\mathcal{C}^r$; 2) when $\gamma_1\!=\!0$ or
$\gamma_1\!=\!\beta$, $\mathcal{C}^p\!=\!\mathcal{C}^r$; and 3) when
$\gamma_1\!>\!\beta$, $\mathcal{C}^p\!<\!\mathcal{C}^r$. This is in accordance
with our analytical results in Proposition \ref{proposition:
relationship_C1_C2}.
Moreover, for the proposed scheme,  we adopt (\ref{eq: C_r}) and (\ref{eq:
SC_gamma_higher_beta}) as the analytical spatial capacity  for the cases of
$\gamma_1=0$ and $\gamma_1\geq \beta$, respectively, and observe that the
analytical results of the spatial capacity fit well to the simulation
counterparts.
Furthermore, for the case of $0<\gamma_1<\beta$ of the proposed scheme, where
only  approximate expressions for the spatial capacity  are available, we
compare the approximate performance of the three approximate approaches given in
Section \ref{section: spatial_capacity_characterization}. It is observed that
the integral-based expression by the proposed approximate approach, given in
(\ref{eq: c_p_proposed}),  provides a tight approximate to  $\mathcal{C}^p$ for
the case of $0<\gamma_1<\beta$. In addition, as a cost of expressing in
closed-form,   (\ref{eq: tractable_sc}) is not as tight as (\ref{eq:
c_p_proposed}), but (\ref{eq: tractable_sc}) still provides a close approximate
to $\mathcal{C}^p$ for the case of $0<\gamma_1<\beta$.  At last,  it is observed
that the closed-form expression given in (\ref{eq: conventional_result}) by the
conventional approximate approach cannot properly approximate $\mathcal{C}^p$
for the case of $0<\gamma_1<\beta$ as expected.

\begin{figure}[t]
\centering
\DeclareGraphicsExtensions{.eps,.mps,.pdf,.jpg,.png}
\DeclareGraphicsRule{*}{eps}{*}{}
\includegraphics[angle=0, width=0.7\textwidth]{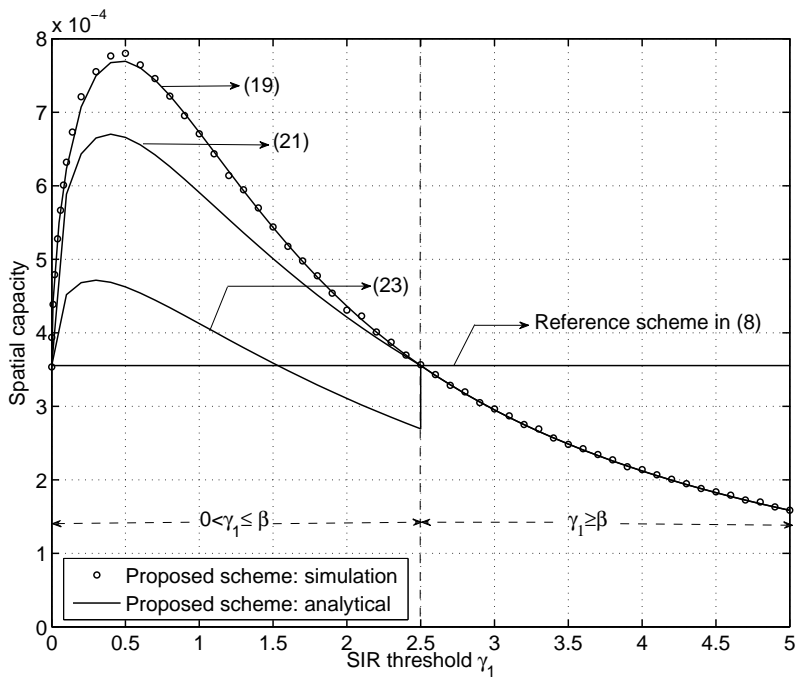}
\caption{Spatial capacity against $\gamma_1$ with $\lambda_0=0.0025$ and
$\beta=2.5$.}
\label{fig:3}
\end{figure}

\begin{figure}
\centering
\DeclareGraphicsExtensions{.eps,.mps,.pdf,.jpg,.png}
\DeclareGraphicsRule{*}{eps}{*}{}
\includegraphics[angle=0, width=0.7\textwidth]{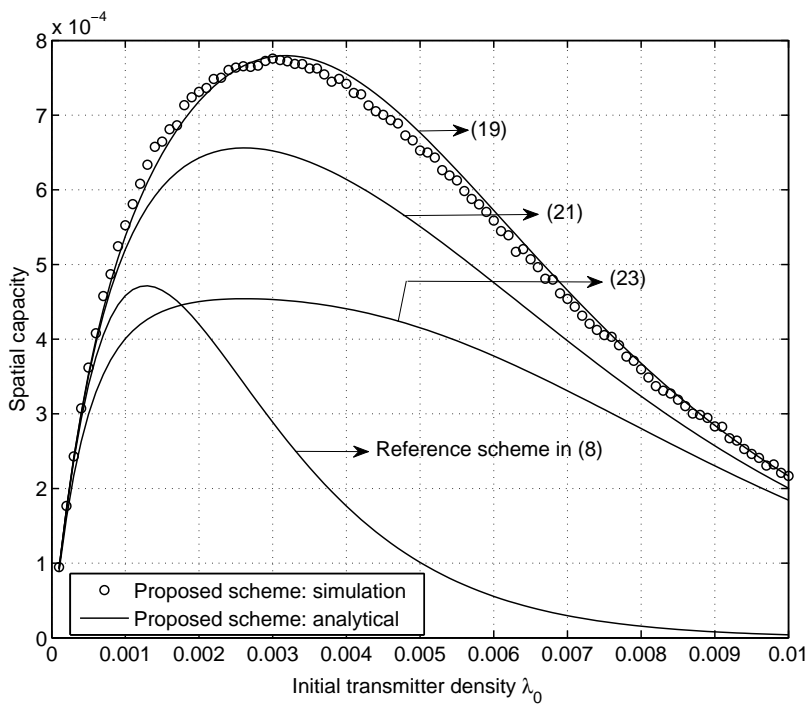}
\caption{Spatial capacity against $\lambda_0$. $\gamma_1=0.6$. $\beta=2.5$.}
\label{fig:4}
\end{figure}

Fig.~\ref{fig:4} shows the spatial capacity versus the initial transmitter
density $\lambda_0$ when $\gamma_1<\beta$. We set  $\gamma_1\!=\!0.6$. For the
proposed scheme, similar to the case in Fig.~\ref{fig:3}, we observe tight and
close approximates are provided by (\ref{eq: c_p_proposed}) and (\ref{eq:
tractable_sc}), respectively, based on the proposed approximate approach,  while
improper approximate is provided by (\ref{eq: conventional_result}) based on the
conventional approximate approach.  Moreover, it is observed that the spatial
capacity of the proposed scheme is always larger than that of the reference
scheme, given in (\ref{eq: C_r}), for all values of $\lambda_0$, which is as
expected from Proposition \ref{proposition: relationship_C1_C2} since
$\gamma_1\!<\!\beta$ in this example.
Furthermore, for both the proposed and reference schemes, we observe an interesting
\emph{density-capacity tradeoff}: by increasing  $\lambda_0$, the spatial
capacity first increases  due to more available transmitters, but as $\lambda_0$
exceeds a certain threshold, it starts to decrease, due to the more dominant
interference effect. Thus,  to maximize the spatial capacity,  under the system scenario set in Fig.~\ref{fig:4},  the optimal $\lambda_0$ should be set as $0.003/m^2$.

\subsubsection{Performance Comparison with Existing Distributed Schemes}
We consider two existing distributed scheduling schemes for performance comparison.
The first scheme   is the iterative probability-based scheduling as in \cite{Kim.ProbScheld.14}.
Denote the transmission probability for transmitter $i$ in P-Phase $k$, $1\leq k\leq N-1$,
and the D-Phase as $\phi_i^{(k)}$ or  $\phi_i^{(N)}$, respectively.
For any $k\in\{1,..,N\}$,  \cite{Kim.ProbScheld.14} sets $\phi_i^{(k)}=\min \left(\frac{\textrm{SIR}_i^{(k-1)}}{\beta},1 \right)$.
Intuitively, \cite{Kim.ProbScheld.14} provides a simple and proper way to iteratively adjust
the transmission probability $\phi_i^{(k)}$.
The second scheme  is the channel-threshold  based scheduling with single-stage probing as in \cite{Weber.IT.07}
and \cite{FadingScheduling}, where the received  interference power is
not involved in the transmission decision and each transmitter decides to
transmit in the D-Phase if its direct channel  strength in P-Phase $0$ is no
smaller than a predefined threshold $\gamma_1'$, i.e., $h_{ii}\!\geq
\!\gamma_1'$.
For a fair comparison, we consider single-stage probing with $N=1$ for all the proposed SIR-threshold based scheme,
the probability-based scheduling in \cite{Kim.ProbScheld.14}, and the channel-threshold based
scheduling in \cite{Weber.IT.07} and \cite{FadingScheduling}. 

\begin{figure}
\centering
\DeclareGraphicsExtensions{.eps,.mps,.pdf,.jpg,.png}
\DeclareGraphicsRule{*}{eps}{*}{}
\includegraphics[angle=0, width=0.7\textwidth]{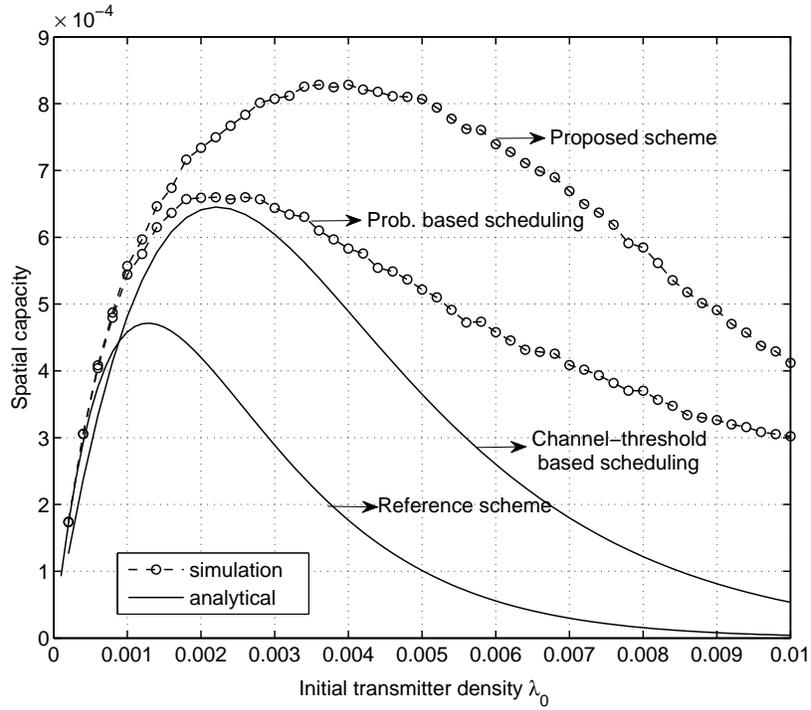}
\caption{Spatial capacity comparison with existing distributed scheduling schemes. $\gamma_1=0.4$. $\beta=2.5$.}
\label{fig:5}
\end{figure}

Fig.  \ref{fig:5} shows the spatial capacities achieved by the proposed scheme, the probability-based scheduling,
 the channel-threshold based scheduling, and the reference scheme without scheduling.
To clearly show the effects of involving
interference in the  transmission decision for the proposed scheme, we set
$\gamma_1'\!=\!\gamma_1=0.4$ for the channel-threshold based scheduling.
We obtain the spatial capacity of the channel-threshold based scheduling
by applying its exact expression given in \cite{FadingScheduling}. Due to the lack of an
exact spatial capacity expression for the probability-based scheduling, we obtain its  spatial capacity
by simulation. We list our observations from Fig.~\ref{fig:5}   as follows:
\begin{itemize}
 \item \emph{SIR based schemes v.s. channel-threshold based scheme:} It is observed that by adapting the transmission decision
to the SIR, the achieved spatial capacities by both the proposed scheme and the probability-based scheduling
are always higher than that by the channel-threshold based scheduling, where the interference information is not
exploited. Moreover, the spatial capacity of the channel-threshold based scheduling
is smaller than that of the reference scheme when $\lambda_0$ is small, and
becomes larger when $\lambda_0$ is sufficiently large. This is in sharp contrast
to the cases of the proposed scheme and the probability-based scheduling, which always guarantee  capacity improvement
over the reference scheme without scheduling.
\item \emph{SIR-threshold based scheduling v.s. probability-based scheduling:} It is interesting to observe that although both the proposed
scheme and the probability-based scheduling adapt the transmission decision to the SIR,
the achieved spatial capacity by the former scheme is always higher than that by the latter one  in this simulation.
This is because that the proposed scheme assures the improvement of the successful transmission probability of each
retained transmitter in the D-Phase, while the probability-based scheduling only assures such improvement with
some probability. Moreover, it is  observed that the optimal initial transmitter density that maximizes the spatial capacity of the proposed scheme is
$\lambda_0^{*}=0.0036$, which is larger than that for the probability-based scheduling locating at
$\lambda_0^{*}=0.0026$.
\end{itemize}
Note that  for the proposed scheme,  a lower SIR threshold $\gamma_1$ allows more transmitters to retain in the D-Phase, so as to have a second chance to transmit. Thus,  by comparing  the simulation results of the proposed scheme in Fig.~\ref{fig:5} with that in Fig.~\ref{fig:4}, it is observed that the achieved optimal spatial capacity over $\lambda_0$ with $\gamma_1=0.4$  in Fig.~\ref{fig:5} is larger than that with $\gamma_1=0.6$  in Fig.~\ref{fig:4}.
In addition, for all the considered schemes in Fig. \ref{fig:5}, we observe a density-capacity tradeoff, which is similar to that in Fig.~\ref{fig:4}.

\subsubsection{Effects of the SIR Estimation and Feedback Errors}
We  consider a more practical scenario, where SIR estimation and feedback errors exist in
the implementation of the proposed scheme, and show the effects of the SIR errors on
the spatial capacity.  Similarly to \cite{fed.noise}, where the channel estimation and feedback errors are assumed 
to be zero-mean Gaussian variables, respectively, we assume  the SIR estimation and feedback errors  follow zero-mean Gaussian distributions with variance $\sigma^2_{est}$ and $\sigma^2_{fed}$, respectively. By further assuming that the two types of
SIR errors are mutually independent, the sum  of both SIR errors   at  transmitter $i$, denoted by $n_i$, follows
zero-mean Gaussian distribution with variance $\sigma^2=\sigma^2_{est}+\sigma^2_{fed}$.
Thus, in the presence of SIR errors, the feedback SIR level at transmitter $i$ in P-Phase 0 is $\textrm{SIR}_i^{(0)}+n_i$.
Moreover, if the feedback SIR level $\textrm{SIR}_i^{(0)}+n_i\geq \gamma_1$ for a given SIR threshold $\gamma_1$ in P-Phase 1,
transmitter $i$  decides to transmit in P-Phase $1$; otherwise, it decides to be idle in the remaining time
of this time slot.
Similar to its counterpart without SIR errors in Fig. \ref{fig:4} and Fig. \ref{fig:5}, the spatial capacity with SIR errors is calculated as an average value
over all the transmitters' random locations, the random fading channels, as well as the random SIR errors.
\begin{figure}
\centering
\DeclareGraphicsExtensions{.eps,.mps,.pdf,.jpg,.png}
\DeclareGraphicsRule{*}{eps}{*}{}
\includegraphics[angle=0, width=0.7\textwidth]{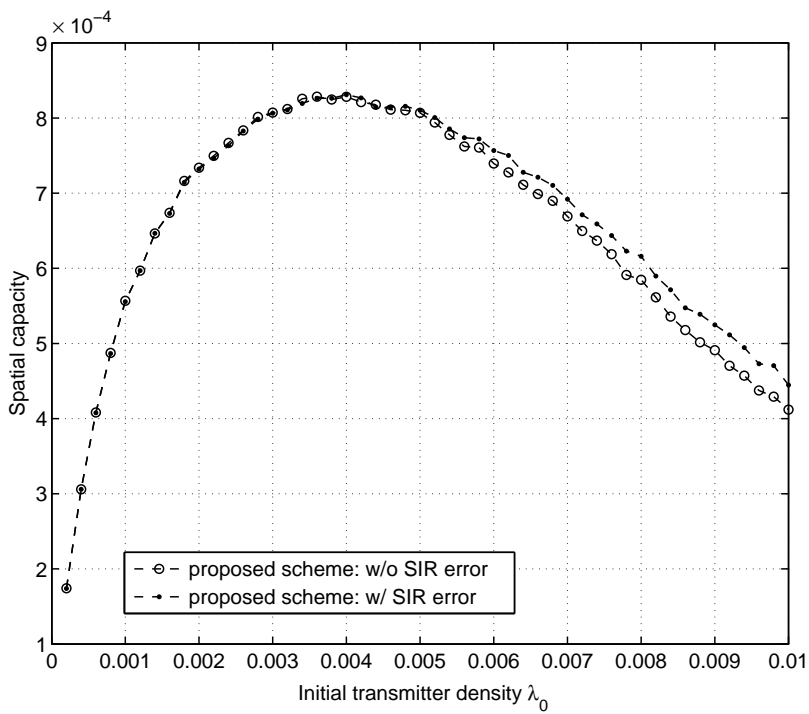}
\caption{Effects of the SIR errors on the spatial capacity of the proposed scheme. $\gamma_1=0.4$. $\beta=2.5$.}
\label{fig:e}
\end{figure}

Fig. \ref{fig:e}  numerically shows  the spatial capacities of the proposed scheme
in both cases with and without SIR errors.
We set $\sigma^2=10^{-2}$ and $\gamma_1=0.4$ in this example.
 It is observed from Fig. \ref{fig:e} that when $\lambda_0$ is small, due to the resultant small interference in the network, each receiver $i$ feeds back a
sufficiently high SIR level $\textrm{SIR}_i^{(0)}$ to its associated transmitter, such that $n_i$ has a small probability to affect the transmitter's decision.
Thus, we observe that when $\lambda_0$ is small, the spatial capacity with SIR errors is tight to that without SIR errors. However, as $\lambda_0$ increases,
due to the decreased   $\textrm{SIR}_i^{(0)}$ at each transmitter $i$,   the transmitters become
more easily  affected by the SIR errors $n_i$ when deciding whether to transmit based on $\textrm{SIR}_i^{(0)}+n_i\geq \gamma_1$.
It is noted that when $\lambda_0$ is sufficiently large, the average SIR level at each transmitter becomes very small; and even if
 $\textrm{SIR}_i^{(0)}\geq \gamma_1$ for transmitter $i$, $\textrm{SIR}_i^{(0)}$ is close to $\gamma_1$ with a large probability.
 Thus, under the case with zero-mean Gaussian distributed error $n_i$, for the transmitters  with
$\textrm{SIR}_i^{(0)}\geq \gamma_1$ in the SIR error-free case, it is more likely that these transmitters  become $\textrm{SIR}_i^{(0)}+n_i< \gamma_1$ than
$\textrm{SIR}_i^{(0)}+n_i\geq  \gamma_1$ in the SIR error-involved case. Similarly, we can
easily find that for the transmitters with $\textrm{SIR}_i^{(0)}< \gamma_1$ in the SIR error-free case,  it is also more likely that these transmitters maintain
$\textrm{SIR}_i^{(0)}+n_i< \gamma_1$ than $\textrm{SIR}_i^{(0)}+n_i\geq  \gamma_1$ in the SIR error-involved case.
Thus, the number of transmitters  with $\textrm{SIR}_i^{(0)}+n_i< \gamma_1$ in the SIR error-involved case is larger than that
with $\textrm{SIR}_i^{(0)}< \gamma_1$ in the SIR error-free case in general.
Hence, as compared to the case without SIR errors, more transmitters will be refrained from transmitting in the D-Phase in the case
with SIR errors, which improves the successful transmission probability in the D-Phase due to the reduced interference.
As a result,  it is interesting to observe from Fig. \ref{fig:e} that
  when the initial transmitter density $\lambda_0$  increases to some significant point, the spatial capacity
with SIR errors becomes slightly higher than that without SIR errors; and their gap slowly increases over $\lambda_0$ after this point.
Therefore, inaccurate SIR may even help improve the SIR-based scheduling performance in more interference-limited regime, which makes the proposed design robust to SIR errors.

\section{SIR-Threshold based Scheme with Multi-Stage Probing} \label{section:
probing}
In this section, we consider the proposed scheme with multi-stage probing, i.e.,
$N>1$.
In this case, $N$ probing phases are sequentially implemented  to gradually
decide the transmitters that are allowed to transmit in the data transmission phase. 
According to (\ref{eq: C_p_def}), to find the spatial capacity $\mathcal{C}^{p,N}$ with  $N$ probing phases,
we need to first find the successful transmission probability $\mathcal{P}_0^{p,N}$ given in (\ref{eq: P_succ_p_def}).
However, due to the mutually coupled user transmissions over different probing phases,
the successful transmission probability in P-Phase $k$, $0<k\leq N$, is related to the SIR distributions in all the proceeding probing phases (from P-Phase 0 to P-Phase $k-1$).
Moreover, due to the different point process formed by the retained transmitters in each probing phase, the SIR correlations of any two
probing phases are  different.
Thus, it is challenging to express the successful transmission probability and thus the spatial capacity for the case with $N>1$ in general.
As a result, instead of focusing on expressing the spatial capacity $\mathcal{C}^{p,N}$, we focus on
studying how the key system design parameters, such as the SIR thresholds and the number of probing phases $N$, affect
the spatial capacity of the proposed scheme with $N>1$. 
In particular, unlike the case  with $N=1$, where the single-stage overhead $N\tau=\tau \ll T$
is negligible, the multi-stage overhead $N \tau$ with $N>1$ may not be
negligible.
In the following, we first study the impact of multiple SIR thresholds on the
spatial capacity by extending Proposition \ref{proposition: relationship_C1_C2}
for the case of $N=1$ to the case of $N>1$. We then investigate the effects of
the multi-stage probing overhead  on the spatial capacity.

\subsection{Impact of SIR Thresholds}
From (\ref{eq: P_succ_p_def}) and (\ref{eq: C_p_def}), the spatial capacity of
the proposed scheme is determined by the values of  SIR thresholds as well as
the time overhead $N\tau$ for probing. To focus on the impact of the SIR
thresholds, in this subsection, we  assume   $N\tau$  is negligible and thus
have
\begin{equation}
 \mathcal{C}^{p,N}= \lambda_0 \mathbb{P}\big(\textrm{SIR}_0^{(0)}\geq
\gamma_1,...,\textrm{SIR}_0^{(N-1)}\geq \gamma_N, \textrm{SIR}_0^{(N)}\geq
\beta\big) \label{eq: C_p_N}
\end{equation}
 where  the distributions of  $\textrm{SIR}_0^{(k)}$'s, $0\leq k \leq N$,  are
mutually dependent and all the $\Phi_k$'s,  $1\leq k \leq N$, are non-PPPs in
general. It is also noted that for any $1\leq k \leq N$, we have $\Phi_k
\subseteq \Phi_{k-1}$ for $\gamma_k\geq0$. Thus, the network interference level
in $\Phi_k$ is reduced, as compared to that in  $\Phi_{k-1}$. As a result, by
extending Proposition \ref{proposition: relationship_C1_C2} for the case of
$N=1$, we obtain the following proposition for the case of $N>1$.

\begin{proposition} \label{proposition: relationship_cn_cn-1}
Consider two proposed schemes with arbitrary $N-1$ and $N$ probing phases,
respectively,  $N> 1$.
Suppose  the two schemes adopt the same SIR threshold  $\gamma_k\geq0$ in each
$\Phi_k$, $\forall k\in\{1,...,N-1\}$.
Then given $\beta>0$,  by varying the SIR threshold $\gamma_{N}\in [0,\infty)$
in the data transmission phase for the proposed scheme with $N$ probing phases,
 we  have the following relationship between $\mathcal{C}^{p,N}$ and
$\mathcal{C}^{p,N-1}$ based on (\ref{eq: C_p_N}):
\begin{equation}
  \left\{
   \begin{array}{l}
    \mathcal{C}^{p,N} > \mathcal{C}^{p,N-1},~\text{if}~\gamma_{N-1
}<\gamma_{N}<\beta \textrm{~\big(\emph{conservative} transmission regime\big)}\\
    \mathcal{C}^{p,N}=\mathcal{C}^{p,N-1},~\text{if}~0\leq \gamma_{N}\leq
\gamma_{N-1}~\text{or}~\gamma_{N}=\beta \textrm{~\big(\emph{neutral}
transmission regime\big)}\\
    \mathcal{C}^{p,N} < \mathcal{C}^{p,N-1},~\text{if}~\gamma_{N}>\beta
\textrm{~\big(\emph{aggressive} transmission regime\big)}.
   \end{array}
  \right.   \label{C_p_N-1_N}
\end{equation}
\end{proposition}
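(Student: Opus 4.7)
My plan is to reduce the comparison between $\mathcal{C}^{p,N}$ and $\mathcal{C}^{p,N-1}$ to the single-probing comparison already established in Proposition~\ref{proposition: relationship_C1_C2}. First, I would observe that, since both schemes share the same initial PPP $\Phi_0$ and the same thresholds $\gamma_1,\ldots,\gamma_{N-1}$, they can be coupled on a common probability space so that the processes $\Phi_0,\Phi_1,\ldots,\Phi_{N-1}$ and all SIR values $\textrm{SIR}_i^{(0)},\ldots,\textrm{SIR}_i^{(N-1)}$ coincide pathwise across the two schemes. Under this coupling, the scheme with $N-1$ probing phases treats P-Phase~$N-1$ as the D-Phase with success criterion $\textrm{SIR}_0^{(N-1)}\geq\beta$, whereas the scheme with $N$ probing phases inserts one additional threshold test $\textrm{SIR}_0^{(N-1)}\geq\gamma_N$ to produce $\Phi_N$ and then requires $\textrm{SIR}_0^{(N)}\geq\beta$. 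So, conditional on the common history up to $\Phi_{N-1}$, the problem is structurally identical to the $N=1$ comparison, but with $\Phi_{N-1}$ playing the role that $\Phi_0$ played in Proposition~\ref{proposition: relationship_C1_C2}.

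I would then carry out a pathwise case analysis using two monotonicities: (i) since $\Phi_N\subseteq\Phi_{N-1}$, the interference at the typical receiver can only decrease, so $\textrm{SIR}_0^{(N)}\geq\textrm{SIR}_0^{(N-1)}$ whenever $0\in\Phi_N$; and (ii) every $i\in\Phi_{N-1}$ automatically satisfies $\textrm{SIR}_i^{(N-1)}\geq\textrm{SIR}_i^{(N-2)}\geq\gamma_{N-1}$. For $\gamma_{N-1}<\gamma_N<\beta$, any user successful under the $(N-1)$-scheme has $\textrm{SIR}_0^{(N-1)}\geq\beta>\gamma_N$, hence survives the extra filter and, by (i), remains successful under the $N$-scheme; in addition, users with $\gamma_N\leq\textrm{SIR}_0^{(N-1)}<\beta$ may be boosted to $\textrm{SIR}_0^{(N)}\geq\beta$ by the further interference reduction, giving strict improvement with positive probability. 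For $\gamma_N>\beta$, the extra filter excludes users that would have been successful in the $(N-1)$-scheme, while by (i) the condition $\textrm{SIR}_0^{(N)}\geq\beta$ becomes automatic on the survivors, so the $N$-scheme success event is strictly contained in the $(N-1)$-scheme event. For the neutral regime, property (ii) shows that $\gamma_N\leq\gamma_{N-1}$ makes the extra filter vacuous and forces $\Phi_N=\Phi_{N-1}$, while $\gamma_N=\beta$ makes the D-Phase success condition automatic (again by (i)) and collapses the $N$-scheme success event back to that of the $(N-1)$-scheme.

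The main subtlety will be justifying the extension of the neutral regime from isolated values of the threshold (only $\gamma_1=0$ and $\gamma_1=\beta$ in Proposition~\ref{proposition: relationship_C1_C2}) to the entire interval $0\leq\gamma_N\leq\gamma_{N-1}$ together with the point $\gamma_N=\beta$. This broadening is precisely due to the accumulated constraint $\textrm{SIR}_i^{(N-1)}\geq\gamma_{N-1}$ inherited by every retained transmitter, so any $\gamma_N\leq\gamma_{N-1}$ imposes no additional thinning; recognizing this requires combining monotonicity (ii) with the definition of the nested processes $\Phi_k$. Importantly, the entire argument is pathwise and depends only on the nestedness $\Phi_N\subseteq\Phi_{N-1}\subseteq\cdots\subseteq\Phi_0$ and the resulting monotone SIR ordering, not on the (non-PPP) distribution of any $\Phi_k$. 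Consequently, as in the remark following Proposition~\ref{proposition: relationship_C1_C2}, the conclusion holds for arbitrary fading and transmitter-location distributions, and the proof reduces to careful bookkeeping of success events rather than any new interference calculation.
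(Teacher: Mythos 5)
Your proposal is correct and follows essentially the same route as the paper's Appendix D: the same three-case analysis driven by the nestedness $\Phi_N\subseteq\Phi_{N-1}$ (hence $\textrm{SIR}_0^{(N)}\geq\textrm{SIR}_0^{(N-1)}$), the observation that any $\gamma_N\leq\gamma_{N-1}$ makes the extra filter vacuous, and the automatic D-Phase success when $\gamma_N\geq\beta$. The only difference is presentational: you phrase it as a pathwise event-inclusion argument under the (trivially shared) history up to $\Phi_{N-1}$, whereas the paper writes the same facts as a ratio of conditional probabilities mirroring the proof of Proposition~\ref{proposition: relationship_C1_C2}, and both treat the strictness in the conservative regime at a comparable level of detail.
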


\begin{proof}
Please refer to Appendix D.
\end{proof}

\begin{remark}
Similar to the case of Proposition \ref{proposition: relationship_C1_C2}, in
Proposition  \ref{proposition: relationship_cn_cn-1}, in the conservative
transmission regime with  $\gamma_{N-1 }<\gamma_{N}<\beta$,  we obtain improved
spatial capacity; in  the aggressive transmission regime   with
$\gamma_{N}>\beta$, we obtain reduced  spatial capacity; and in the neutral
transmission region with $0\leq \gamma_{N}\leq \gamma_{N-1}$ or
$\gamma_{N}=\beta$, we obtain unchanged capacity. Moreover, based on the fact
that the conservative transmission decision is beneficial for improving the
spatial capacity of the proposed scheme, we obtain the following corollary,
which gives a proper method to  set  the values of all the SIR-thresholds, such
that the improvement of spatial capacity over the number of probing phases is
assured.
\end{remark}

\begin{corollary} \label{corollary: N>1}
For a proposed scheme with $N\!>\!1$ probing phases with  negligible overhead,
if the designed SIR thresholds are properly increased as
$0<\gamma_1<\cdots<\gamma_N<\beta$, the resulting spatial capacity
$\mathcal{C}^{p,N}$ increases with the  number of probing phases $N$.
\end{corollary}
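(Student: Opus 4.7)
The plan is to establish Corollary 1 by a direct induction on the number of probing phases $N$, using Proposition 4.1 as the sole engine. The key structural observation is that the ordering $0 < \gamma_1 < \gamma_2 < \cdots < \gamma_N < \beta$ was engineered precisely so that, at every step of the induction, the newly introduced SIR threshold lies strictly between the previous one and $\beta$, which coincides with the conservative transmission regime of Proposition 4.1.

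For the base case $N = 2$, I would compare $\mathcal{C}^{p,2}$ with $\mathcal{C}^{p,1}$. Both schemes share the threshold $\gamma_1$ used in P-Phase 1, and the two-phase scheme additionally uses $\gamma_2$ satisfying $\gamma_1 < \gamma_2 < \beta$. Applying Proposition 4.1 with $N = 2$ places this configuration in the conservative regime and immediately yields $\mathcal{C}^{p,2} > \mathcal{C}^{p,1}$.

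For the inductive step, suppose $\mathcal{C}^{p,k-1} < \mathcal{C}^{p,k}$ has been established for some $2 \leq k < N$. To move from $k$ to $k+1$ probing phases, I would invoke Proposition 4.1 again, now with the roles of $N$ and $N-1$ played by $k+1$ and $k$: the two schemes share the thresholds $\gamma_1, \ldots, \gamma_k$, while the $(k+1)$-phase scheme introduces the additional threshold $\gamma_{k+1}$. The hypothesis $\gamma_k < \gamma_{k+1} < \beta$ again triggers the conservative regime, yielding $\mathcal{C}^{p,k+1} > \mathcal{C}^{p,k}$. Chaining these inequalities produces $\mathcal{C}^{p,1} < \mathcal{C}^{p,2} < \cdots < \mathcal{C}^{p,N}$, which is the desired monotonicity.

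The hard part is essentially nonexistent here, since the delicate analytical work --- tracking the correlated, non-PPP SIR distributions across probing phases and handling the coupled interference reduction --- has already been absorbed into the proof of Proposition 4.1. The only care required is bookkeeping: verifying that when moving from the $k$-phase scheme to the $(k+1)$-phase scheme, the first $k$ thresholds truly coincide, so that the hypothesis of Proposition 4.1 applies verbatim. Once this compatibility is observed, the induction is purely mechanical and produces the strict monotonicity asserted by the corollary.
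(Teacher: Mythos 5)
Your proposal is correct and follows essentially the same route as the paper: the corollary is obtained by repeatedly applying Proposition~\ref{proposition: relationship_cn_cn-1} in its conservative regime ($\gamma_{k}<\gamma_{k+1}<\beta$) when passing from $k$ to $k+1$ probing phases, which is precisely the chaining argument you describe. The inductive bookkeeping you add (checking that the first $k$ thresholds coincide at each step) is exactly what the paper leaves implicit, so there is no substantive difference.
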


It is worth noting that  based on (\ref{eq: C_p_N}), for a given $\lambda_0>0$,
$\mathcal{C}^{p,N}$ is only determined by the successful transmission
probability $\mathcal{P}_{0}^{p,N}$, given in (\ref{eq: P_succ_p_def}). Thus,
both Proposition  \ref{proposition: relationship_cn_cn-1} and Corollary
\ref{corollary: N>1} also apply for $\mathcal{P}_{0}^{p,N}$.

\begin{figure}
\centering
\DeclareGraphicsExtensions{.eps,.mps,.pdf,.jpg,.png}
\DeclareGraphicsRule{*}{eps}{*}{}
\includegraphics[angle=0, width=0.7\textwidth]{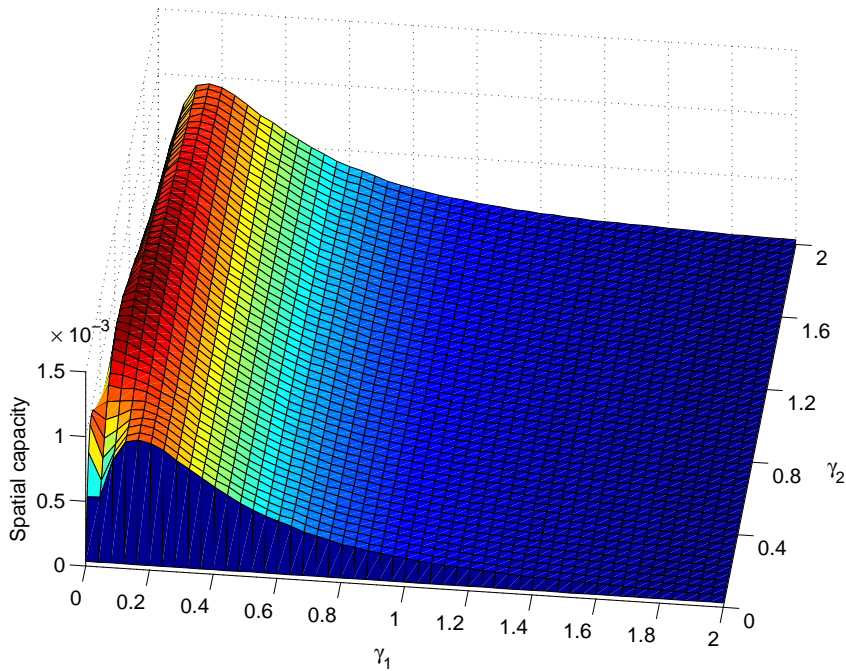}
\caption{Spatial capacity against $\gamma_1$ and $\gamma_2$ for $N=2$.}
\label{fig:gamma_1_2}
\end{figure}
In the next,  we provide a numerical example with $N=2$ to further discuss the
impact of SIR thresholds on the spatial capacity. In this example,  we set
$\alpha=4$, $\beta=2$, and $d=10$.  Fig.~\ref{fig:gamma_1_2} shows the
corresponding spatial capacity over $\gamma_1\in [0,\beta]$ and $\gamma_2\in
[0,\beta]$.
It is observed from Fig.~\ref{fig:gamma_1_2} that if $0 \leq \gamma_2\leq
\gamma_1$, the spatial capacity  achieved at $(\gamma_1, \gamma_2)$
remains unchanged over $\gamma_2$; and if $\gamma_2 > \gamma_1$, the spatial
capacity achieved at $(\gamma_1, \gamma_2)$ is always larger than that achieved
at $(\gamma_1, 0)$.
Apparently, this is in accordance with Proposition \ref{proposition:
relationship_cn_cn-1}. Moreover, among all the points over $\gamma_1\in
[0,\beta]$ and $\gamma_2\in [0,\beta]$, such trend is more obviously observed
for   small $\gamma_1$ and small $\gamma_2$.
In addition,  it is also observed that the spatial capacity varies much faster
over $\gamma_1$ than over $\gamma_2$, and when $\gamma_1$ is sufficiently large,
 the resulting spatial capacity does not change much over $\gamma_2$.   As a
result, the SIR threshold $\gamma_1$ plays a more critical role in determining
the spatial capacity than $\gamma_2$, since $\gamma_1$ determines how many
transmitters can have a second chance to contend the transmission opportunity.
Furthermore, it is observed that \emph{ to achieve a higher spatial capacity, it
is  preferred to start with a small $\gamma_1>0$, and then set $\gamma_2<\beta$
with an increasing step-size, i.e.,  $\gamma_1-0<\gamma_2-\gamma_1$.} As shown
in this example, the maximum spatial capacity is achieved at $\gamma_1=0.15$ and
$\gamma_2=0.8$; and the spatial capacity at $\gamma_1=0.15$ reduces very slowly
over $\gamma_2\in[0.8, \beta]$.

\subsection{Effects of Multi-Stage Overhead of Probing}
In this subsection, we assume the multi-stage overhead $N\tau$ for probing is
not negligible and study the effects of $N\tau$  on the spatial capacity.
In this case, it is easy to find from (\ref{eq: C_p_def}) that the effective
data transmission time is reduced over the probing-stage number $N$, which
reduces the spatial capacity. On the other hand, from  Corollary~\ref{corollary:
N>1}, under the constraint that $0<\gamma_1<\cdots<\gamma_N<\beta$, the
successful transmission probability increases over $N$.   As a result, from
(\ref{eq: C_p_def}), there exists a probing-capacity tradeoff over $N$ under the
condition that $0<\gamma_1<\cdots<\gamma_N<\beta$. In the following, we
illustrate the probing-capacity tradeoff by a  numerical example (see
Fig.~\ref{fig:time_overhead}).

\begin{figure}[t]
\centering
\DeclareGraphicsExtensions{.eps,.mps,.pdf,.jpg,.png}
\DeclareGraphicsRule{*}{eps}{*}{}
\includegraphics[angle=0, width=0.7\textwidth]{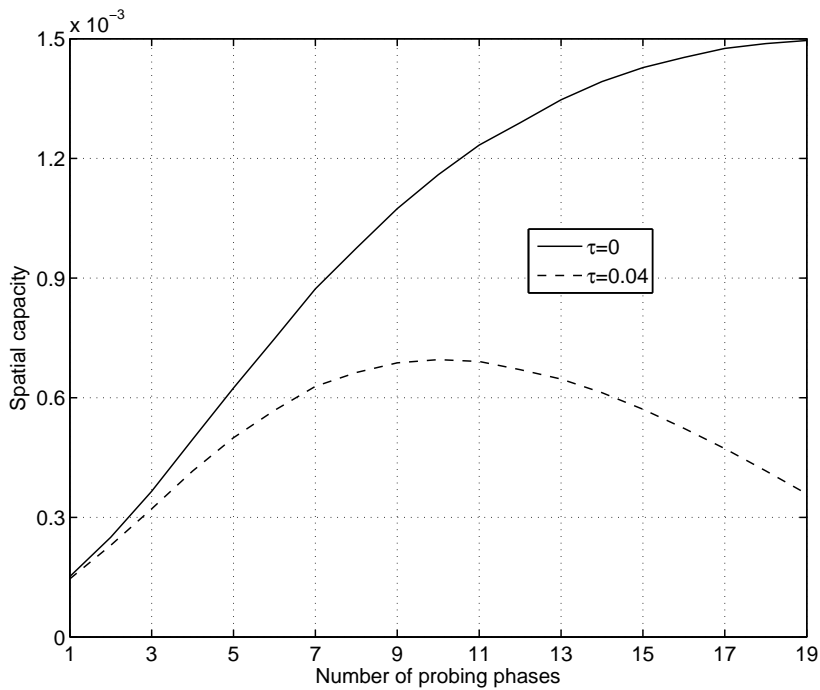}
\caption{Spatial capacity against the number of probing phases.}
\label{fig:time_overhead}
\end{figure}

In this example, we show the spatial capacity of the proposed scheme over the
number of probing phases $N$.  We set $\alpha=4$, $\beta=2$, $d=10$, and the
time slot duration $T=1$ second (s).  We consider two cases with $\tau=0$s and
$\tau=0.04$s,  respectively, where the time overhead for probing is zero for the
former case and non-negligible for the latter one. For both cases, as
enlightened by Fig.~\ref{fig:gamma_1_2},  we start with $\gamma_1$=0.01 and
gradually increase $\gamma_k$, $2\leq k \leq N$, based on
$\gamma_k=\gamma_{k-1}+0.01k$, which gives an increasing step-size with
$\gamma_k-\gamma_{k-1}<\gamma_{k+1}-\gamma_k$. To ensure  $\gamma_N<\beta$,  the
maximum allowable $N$ is obtained as $19$. As shown in
Fig.~\ref{fig:time_overhead}, it is observed that the spatial capacity increases
over $N$ for the case with   $\tau=0$, which  validates Corollary
\ref{corollary: N>1}. Moreover, for the case with  $\tau=0.04$, the
probing-capacity tradeoff is observed as expected:  \emph{the spatial capacity
first increases over $N$,
due to the improved performance of successful transmission probability,  but
after $N=10$, the spatial capacity begins to decrease over $N$, due to the more
dominant effects of the reduced  data transmission time.}

\section{Conclusion}
In this paper, we addressed the spatial capacity analysis and characterization
in a wireless ad hoc network by an efficient SIR-threshold based scheme.  For
single-stage probing, we showed the conditions under which the spatial capacity
of the proposed scheme performs strictly better than that of the reference
scheme without scheduling.
We also characterized the spatial capacity of the proposed scheme in
closed-form. In particular, we proposed a new approach to approximate the
spatial capacity,  which is useful for analyzing performance of wireless
networks with interacted  transmitters.
For  multi-stage probing, we extended the results for the case of single-stage
probing, and gave the condition  under which the spatial capacity of the
proposed scheme can be gradually improved over the probing-stage number. We also
studied the effects of multi-stage  probing overhead and investigated the
probing-capacity tradeoff.

 Although the considered on/off power control in this paper is more practical
 than the multi-level power control for implementation \cite{BPC.Prac}, it is interesting
 to extend our network-level performance analysis to the multi-level  power control in our future work.
One issue needs to be properly addressed is the power convergence in the stochastic network.
Unlike the power convergence studied in \cite{Foschini-Miljanic} and \cite{Yates}
for deterministic wireless networks,   the transmit power level of each transmitter in the current probing phase is stochastically 
determined by   the SIR distributions in all the proceeding  probing phases. Moreover, due to the different point process formed by the retained transmitter in each probing phase, the SIR distributions in all the probing phases are mutually different.
Although challenging, it is  of our interest to find the condition that assures the power convergence in our considered stochastic network, and study the spatial capacity in a stable system with converged power level of each transmitter. 
In addition, we are also interested to extend our current study on synchronized transmission  to the asynchronized transmission  in our future work.
Unlike the synchronized transmission, due to the newly added transmitters in each probing phase,
it is more difficult to control and analyze the interference in each probing phase. Moreover, note that the asynchronized transmission may
cause unstable communication quality for  the transmitters.
It is thus of our interest to design effective transmission scheme that  can assure   stable communication quality for all 
the transmitters, by effectively controlling  the network interference to improve the spatial capacity in our future work.

\appendices

\section{Proof of Proposition\ref{proposition:
relationship_C1_C2}}\label{appendix:proof 1}

By expressing $\textrm{SIR}_0^{(0)}=h_{00}d^{-\alpha}/I_0$ and
$\textrm{SIR}_0^{(1)}=h_{00}d^{-\alpha}/I_1$, based on (\ref{eq: P_succ_r_def}),
(\ref{eq: C_r_def}), (\ref{eq: P_succ_p_1_cond}) and (\ref{eq: C_p_1}), we have
\begin{equation}
\frac{\mathcal{C}^{p}}{\mathcal{C}^{r}}\!=\!\frac{\mathbb{P}(h_{00}\!\geq
\!\gamma_1 d^{\alpha}I_0) \!\times \! \mathbb{P}( h_{00}\!\geq \! \beta
d^{\alpha}I_1|h_{00}\! \geq \! \gamma_1 d^{\alpha}I_0)}{\mathbb{P}(h_{00}\geq
\beta d^{\alpha}I_0)}. \label{eq: CII_CI_ratio}
\end{equation}
In the following, we compare  $\frac{\mathcal{C}^{p}}{\mathcal{C}^{r}}$ with $1$
by varying $\gamma_1\in[0,\infty)$.
Clearly, when $\gamma_1=0$, $\frac{\mathcal{C}^{p}}{\mathcal{C}^{r}}=1$.
Next, we consider the case of $\gamma_1 \geq \beta$.
Since $I_0 \geq I_1$, if $\gamma_1 \geq \beta$, we obtain $\mathbb{P}(
h_{00}\geq \beta d^{\alpha}I_1|h_{00}\geq \gamma_1 d^{\alpha}I_0)=1$.
Moreover, for the non-negative and continuous random variables $h_{00}$ and
$I_0$, it is easy to find that if $\gamma_1 > \beta$,
 $\mathbb{P}(h_{00}\geq \gamma_1 d^{\alpha}I_0) < \mathbb{P}(h_{00}\geq \beta
d^{\alpha}I_0)$, and if $\gamma_1 = \beta$, $\mathbb{P}(h_{00}\geq \gamma_1
d^{\alpha}I_0) =\mathbb{P}(h_{00}\geq \beta d^{\alpha}I_0)$.
As a result, from (\ref{eq: CII_CI_ratio}), if $\gamma_1 > \beta$,
$\frac{\mathcal{C}^{p}}{\mathcal{C}^{r}}<1$, and   if $\gamma_1 = \beta$,
$\frac{\mathcal{C}^{p}}{\mathcal{C}^{r}}=1$.
At last, we consider the case of $0<\gamma_1<\beta$.  In this case, we have
$\mathbb{P}(h_{00}\geq \gamma_1 d^{\alpha}I_0|h_{00}\geq \beta
d^{\alpha}I_0)=1$, or equivalently,
\begin{align}
&\frac{\mathbb{P}(h_{00}\!\geq \! \gamma_1 d^{\alpha}I_0) \! \times \!
\mathbb{P}(h_{00}\! \geq \! \beta d^{\alpha}I_0|h_{00}\! \geq \! \gamma_1
d^{\alpha}I_0)}{\mathbb{P}(h_{00}\! \geq \! \beta d^{\alpha}I_0)}\! =\! 1.
\label{eq: SIR=1}
\end{align}
Moreover, since $\gamma_1\!\neq\!0$ in this case, we have
$F_{I_1}(x)\!>\!F_{I_0}(x)$, $\forall x\!>\!0$, where $F_{I_0}(\cdot)$ and
$F_{I_1}(\cdot)$ denote the cumulative distribution functions (CDFs) of $I_0$
and $I_1$, respectively. It is then easy to verify that
$\mathbb{P}(h_{00}\!\geq \! \beta d^{\alpha}I_1|h_{00}\! \geq \! \gamma_1
d^{\alpha}I_0)\! > \! \mathbb{P}(h_{00}\! \geq \! \beta d^{\alpha}I_0|h_{00}\!
\geq \! \gamma_1 d^{\alpha}I_0)$, for which, by multiplying
$\frac{\mathbb{P}(h_{00}\geq \gamma_1 d^{\alpha}I_0)}{\mathbb{P}(h_{00}\geq
\beta d^{\alpha}I_0)}$ on both sides  and based on (\ref{eq: SIR=1}), we have
$$
\frac{\mathbb{P}(h_{00}\geq \gamma_1 d^{\alpha}I_0) \times \mathbb{P}(
h_{00}\geq \beta d^{\alpha}I_1|h_{00} \geq\gamma_1
d^{\alpha}I_0)}{\mathbb{P}(h_{00}\geq \beta d^{\alpha}I_0)}> 1.
$$
That is, $\frac{\mathcal{C}^{p}}{\mathcal{C}^{r}}>1$.
Proposition \ref{proposition: relationship_C1_C2} thus follows.

\section{Proof of Proposition {\ref{proposition: prob_suc_stage2}}}
Under Assumption 1, we obtain two independent PPPs
$\hat{\Phi}_1$  and $\hat{\Phi}_1^{c}$, with $\hat{\Phi}_1\cup
\hat{\Phi}_1^{c}=\Phi_0$ and $\hat{\Phi}_1\cap \hat{\Phi}_1^{c}=\emptyset$.
Since from (\ref{eq: P_succ_p_1}), it follows that
\begin{align*}
\mathcal{P}_0^{p}=&\mathbb{P}(\textrm{SIR}_0^{(0)}\geq
\gamma_1,\textrm{SIR}_0^{(1)}\geq \beta) \\
=&\mathbb{P}\bigg(\sum_{i\in \Phi_0,i\neq 0}h_{i0}l_{i0}\leq
\frac{h_{00}}{\gamma_1 d^{\alpha}}, \sum_{i\in \Phi_1,i\neq 0}h_{i0}l_{i0}\leq
\frac{h_{00}}{\beta d^{\alpha}}\bigg),
\end{align*}
we have
\begin{align}
\mathcal{P}_0^{p}\approx & \mathbb{P}\bigg(\!\Big(\! \sum_{i\in
\hat{\Phi}_1,i\neq 0} \!h_{i0}l_{i0}\! + \! \sum_{i\in \hat{\Phi,i\neq
0}_1^{c}}h_{i0}l_{i0}\! \Big) \! \leq \! \frac{h_{00}}{\gamma_1 d^{\alpha}},
\sum_{i\in \hat{\Phi}_1}h_{i0}l_{i0}\! \leq \! \frac{h_{00}}{\beta d^{\alpha}}\!
\bigg) \nonumber \\
=&\mathbb{P}\Big(\hat{I}_1+\hat{I}_1^c\leq \frac{h_{00}}{\gamma_1 d^{\alpha}},
\hat{I}_1\leq \frac{h_{00}}{\beta d^{\alpha}}\Big). \nonumber
\end{align}
Due to the independence of $\hat{\Phi}_1$ and $\hat{\Phi}_1^{c}$, $\hat{I}_1$ is
independent of $\hat{I}_1^c$.
Given $h_{00}$, we thus have
\begin{align}
&\mathbb{P}\Big(\hat{I}_1+\hat{I}_1^c\leq \frac{h_{00}}{\gamma_1 d^{\alpha}},
\hat{I}_1\leq \frac{h_{00}}{\beta d^{\alpha}}\Big|h_{00}\Big)  \nonumber \\
=&\int_{0}^{\frac{h_{00}}{\beta d^{\alpha}}} f_{\hat{I}_1}(x_1)
\int_{0}^{\frac{h_{00}}{\gamma_1 d^{\alpha}}-x_1} f_{\hat{I}_1^c}(x_2)\,dx_2
\,dx_1. \label{eq: in_proof}
\end{align}
By integrating  (\ref{eq: in_proof}) over the (exponential) distribution of
$h_{00}$, we obtain  (\ref{eq: prob_suc_stage2}). Proposition {\ref{proposition:
prob_suc_stage2}} thus follows.

\section{Proof to Proposition \ref{proposition: method_2}}

In this proof, we  first derive an upper bound for the right-hand side of
(\ref{eq: prob_suc_stage2}), and then by properly lower-bounding the obtained
upper bound, we give a tractable approximate to $\mathcal{P}_{0}^{p}$.

First, in (\ref{eq: prob_suc_stage2}), by increasing the upper limit of
$f_{\hat{I}_1^c}(x_2)$, i.e., $\gamma_1 d^{\alpha}-x_1$, to  $\gamma_1
d^{\alpha}$,  the tight approximation of $\mathcal{P}_{0}^{p}$ is upper-bounded
as
\begin{align}
 \mathcal{P}_{0}^{p}
& \approx  \int_{0}^{\infty}\!\!\! e^{-h_{00}} \!\!
\int_{0}^{\frac{h_{00}}{\beta d^{\alpha}}} \!\!\!\! f_{\hat{I}_1}(x_1) \! \!
\int_{0}^{\frac{h_{00}}{\gamma_1 d^{\alpha}}-x_1} \!\!\!\!\!
f_{\hat{I}_1^c}(x_2)\,dx_2 \,dx_1 \,dh_{00} \nonumber \\
& < \int_{0}^{\infty}\!\!\! e^{-h_{00}} \!\! \int_{0}^{\frac{h_{00}}{\beta
d^{\alpha}}} \!\!\!\! f_{\hat{I}_1}(x_1) \! \! \int_{0}^{\frac{h_{00}}{\gamma_1
d^{\alpha}}} \!\!\!\!\! f_{\hat{I}_1^c}(x_2)\,dx_2 \,dx_1 \,dh_{00}.  \label{eq:
tight_upper}
\end{align}

Next, denote $Y_1(h_{00})=\int_{0}^{\frac{h_{00}}{\beta d^{\alpha}}} \!\!
f_{\hat{I}_1}(x_1) \,dx_1$ and $Y_2(h_{00})=\int_{0}^{\frac{h_{00}}{\gamma_1
d^{\alpha}}} \!\! f_{\hat{I}_1^c}(x_2) \,dx_2$. We can rewrite (\ref{eq:
tight_upper}) as
\begin{equation}
 \int_{0}^{\infty}\!\!\! e^{-h_{00}} \!\! \int_{0}^{\frac{h_{00}}{\beta
d^{\alpha}}} \!\!\!\! f_{\hat{I}_1}(x_1) \! \! \int_{0}^{\frac{h_{00}}{\gamma_1
d^{\alpha}}} \!\!\!\!\! f_{\hat{I}_1^c}(x_2)\,dx_2 \,dx_1 \,dh_{00} =
\mathbb{E}\big[Y_1(h_{00})Y_2(h_{00}) \big]. \label{eq: tight_upper_Expectation}
\end{equation}
Note that both $Y_1(h_{00})$ and $Y_2(h_{00})$ are monotonically increasing over
$h_{00}$. Thus,   according to the Chebyshev's inequality \cite{Chebyshev}, the
right-hand side of (\ref{eq: tight_upper_Expectation}) can be lower-bounded as
\begin{equation}
 \mathbb{E}\big[Y_1(h_{00})Y_2(h_{00}) \big] \geq
\mathbb{E}\big[Y_1(h_{00})\big] \mathbb{E}\big[Y_2(h_{00}) \big]. \label{eq:
tight_upper_lower}
\end{equation}
For $\mathbb{E}\big[Y_1(h_{00})\big]$ in (\ref{eq: tight_upper_lower}),  by
integrating $Y_1(h_{00})$ over the exponential distributed $h_{00}$, we obtain
that
\begin{align}
\mathbb{E}\big[Y_1(h_{00})\big]  =& \int_{0}^{\infty}\!\!\! e^{-h_{00}} \!\!
\int_{0}^{\frac{h_{00}}{\beta d^{\alpha}}} \!\!\  f_{\hat{I}_1}(x_1) \,dx_1
\,dh_{00} \nonumber \\
=& P \Big(0\leq \hat{I}_1 \leq \frac{h_{00}}{\beta d^{\alpha}}  \Big)  \nonumber
\\
\overset{(a)}{=}&\exp(-\pi \lambda_1 d^2 \beta^{\frac{2}{\alpha}}\rho),
\label{eq: tight_upper_lower_Y1}
\end{align}
where $(a)$ is obtained based on Proposition \ref{proposition: SIR_1}, by
replacing $\lambda_0$ with $\lambda_1$. Similarly, we can obtain that
\begin{equation}
\mathbb{E}\big[Y_2(h_{00}) \big] =  \exp(-\pi \lambda_1^c d^2
\gamma_1^{\frac{2}{\alpha}}\rho). \label{eq: tight_upper_lower_Y2}
\end{equation}

Finally, by substituting (\ref{eq: tight_upper_lower_Y1}) and (\ref{eq:
tight_upper_lower_Y2}) into the right-hand side of (\ref{eq: tight_upper_lower})
and then adopting the resulting  right-hand side of (\ref{eq:
tight_upper_lower}) to approximate $\mathcal{P}_{0}^{p} $, we can obtain a
tractable approximate to $\mathcal{P}_{0}^{p} $ for the case of $0<\gamma_1 <
\beta$ as in (\ref{eq: tractable_suc_prob}). Proposition \ref{proposition:
method_2} is thus proved.

\section{Proof to Proposition \ref{proposition: relationship_cn_cn-1}}
From (\ref{eq: C_p_N}), we have
\begin{align}
 \mathcal{C}^{p,N}=\lambda_0 \mathbb{P}\big(\textrm{SIR}_0^{(0)}\geq
\gamma_1,...,\textrm{SIR}_0^{(N-1)}\geq \gamma_N\big)
 \mathbb{P}\big( \textrm{SIR}_0^{(N)}\geq \beta\big| \textrm{SIR}_0^{(0)}\geq
\gamma_1,...,\textrm{SIR}_0^{(N-1)}\geq \gamma_N \big). \label{eq: C_p_N_con}
\end{align}
Thus, based on (\ref{eq: C_p_N}) and (\ref{eq: C_p_N_con}), we have
\begin{align}
\frac{ \mathcal{C}_0^{p,N}}{ \mathcal{C}_0^{p,N-1}}=
\frac{\mathbb{P} \big(\textrm{SIR}_0^{(0)}\geq \gamma_1, \cdots,
\textrm{SIR}_0^{(N-1)}\geq \gamma_{N} \big) }{\mathbb{P}
\big(\textrm{SIR}_0^{(0)}\geq \gamma_1, \cdots,   \textrm{SIR}_0^{(N-1)}\geq
\beta \big)}   \mathbb{P}\big(\textrm{SIR}_0^{(N)}\geq \beta \big|
\textrm{SIR}_0^{(0)}\geq \gamma_1, \cdots,  \textrm{SIR}_0^{(N-1)}\geq
\gamma_{N} \big). \label{eq: relationship_pn_pn-1}
\end{align}
Since both proposed schemes adopt the same SIR thresholds $\gamma_k$ for any
$k \in \{1,...,N-1\}$, the   distributions of these $\textrm{SIR}_0^{(k)}$'s are
the same for both proposed schemes and thus do not affect the ratio of $\frac{
\mathcal{C}_0^{p,N}}{ \mathcal{C}_0^{p,N-1}}$.
Hence, in the following, we  focus on the distribution $\textrm{SIR}_0^{(N)}$ by
varying $\gamma_{N}\in [0,\infty)$, and compare $\frac{ \mathcal{C}_0^{p,N}}{
\mathcal{C}_0^{p,N-1}}$ with $1$.
With a proof similar to that of Proposition \ref{proposition:
relationship_C1_C2}, it is easy to verify that
\begin{enumerate}
 \item if $0\leq \gamma_{N}\leq \gamma_{N-1}$, the distribution of
$\textrm{SIR}_0^{(N)}$ is the same as that of $\textrm{SIR}_0^{(N-1)}$; and thus
we have $\frac{ \mathcal{C}_0^{p,N}}{ \mathcal{C}_0^{p,N-1}}=1$;
 \item if $\gamma_{N}\geq \beta$, $\mathbb{P} \big(\textrm{SIR}_0^{(N)}\geq
\beta \big| \textrm{SIR}_0^{(0)}\geq \gamma_1, \cdots,
\textrm{SIR}_0^{(N-1)}\geq \gamma_{N} \big)=1$, and  $\mathbb{P}
\big(\textrm{SIR}_0^{(0)}\geq \gamma_1, \cdots,  \textrm{SIR}_0^{(N-1)}\geq
\gamma_{N} \big)\leq \mathbb{P} \big(\textrm{SIR}_0^{(0)}\geq \gamma_1, \cdots,
\textrm{SIR}_0^{(N)}\geq \beta \big)$, where ``$=$'' holds when $\gamma_{N} =
\beta$. Thus, from (\ref{eq: relationship_pn_pn-1}), if $\gamma_{N}> \beta$,
$\frac{ \mathcal{C}_0^{p,N}}{ \mathcal{C}_0^{p,N-1}}<1$, and if $\gamma_{N}=
\beta$, $\frac{ \mathcal{C}_0^{p,N}}{ \mathcal{C}_0^{p,N-1}}=1$;
 \item if $\gamma_{N-1}<\gamma_{N}< \beta$, we have
\begin{align*}
\frac{ \mathcal{C}_0^{p,N}}{ \mathcal{C}_0^{p,N-1}}&>
\frac{\mathbb{P} \big(\textrm{SIR}_0^{(0)}\geq \gamma_1, \cdots,
\textrm{SIR}_0^{(N-1)}\geq \gamma_{N} \big)}{\mathbb{P}
\big(\textrm{SIR}_0^{(0)}\geq \gamma_1, \cdots,   \textrm{SIR}_0^{(N-1)}\geq
\beta \big)} \\ \nonumber
&~~\times \mathbb{P} \big(\textrm{SIR}_0^{(N-1)}\geq \beta \big|
\textrm{SIR}_0^{(0)}\geq \gamma_1, \cdots,  \textrm{SIR}_0^{(N-1)}\geq
\gamma_{N} \big) =1.
\end{align*}  \vspace{-2.8mm}
\end{enumerate}
Proposition \ref{proposition: relationship_cn_cn-1} thus follows.

\end{document}